\newcommand{\subparagraph}{}
\titlespacing*{\section}{7pt}{0.8\baselineskip}{0.7\baselineskip}
\newcommand{\myhash}{%
  {\settoheight{\dimen0}{C}\kern-.05em\, \resizebox{!}{\dimen0}{\raisebox{\depth}{\#}}}}
\def\mindex#1{\index{#1}}
\def\sq{\hbox{\rlap{$\sqcap$}$\sqcup$}}
\def\qed{\ifmmode\sq\else{\unskip\nobreak\hfil
\penalty50\hskip1em\null\nobreak\hfil\sq
\parfillskip=0pt\finalhyphendemerits=0\endgraf}\fi\medskip}
\long\def\defbox#1{\framebox[.9\hsize][c]{\parbox{.85\hsize}{%
\parindent=0pt
\baselineskip=12pt plus .1pt      
\parskip=6pt plus 1.5pt minus 1pt 
 #1}}}
\long\def\beginbox#1\endbox{\subsection*{}%
\hbox{\hspace{.05\hsize}\defbox{\medskip#1\bigskip}}%
\subsection*{}}
\def\endbox{}
\def\tr{\mathsf{tr}}
\newsavebox{\junk}
\savebox{\junk}[1.6mm]{\hbox{$|\!|\!|$}}
\def\argmin{\mathop{\rm arg\, min}}
\def\bC{{\mathbb C}}
\def\bE{{\mathbb E}}
\def\bP{{\mathbb P}}
\def\bR{{\mathbb R}}
\def\bV{{\mathbb V}}
\def\bfA{{\bf A}}
\def\bfB{{\bf B}}
\def\bfG{{\bf G}}
\def\bfH{{\bf H}}
\def\bfI{{\bf I}}
\def\bfN{{\bf N}}
\def\bfQ{{\bf Q}}
\def\bfT{{\bf T}}
\def\bfW{{\bf W}}
\def\bfX{{\bf X}}
\def\bfY{{\bf Y}}
\def\bfa{{\bf a}}
\def\bfb{{\bf b}}
\def\bfe{{\bf e}}
\def\bfg{{\bf g}}
\def\bfh{{\bf h}}
\def\bfn{{\bf n}}
\def\bfp{{\bf p}}
\def\bfq{{\bf q}}
\def\bfs{{\bf s}}
\def\bfu{{\bf u}}
\def\bfv{{\bf v}}
\def\bfw{{\bf w}}
\def\bfx{{\bf x}}
\def\bfy{{\bf y}}
\def\bfz{{\bf z}}
\def\scrG{{\mathscr{G}}}
\def\scrP{{\mathscr{P}}}
\def\sfF{{\sf F}}
\def\sfH{{\sf H}}
\def\sfa{{\sf a}}
\def\sfr{{\sf r}}
\def\bfmath#1{{\mathchoice{\mbox{\boldmath$#1$}}%
{\mbox{\boldmath$#1$}}%
{\mbox{\boldmath$\scriptstyle#1$}}%
{\mbox{\boldmath$\scriptscriptstyle#1$}}}}
\def\bfmY{\bfmath{Y}}
\def\bfmhhaY{\bfmath{\hhaY}} 
\def\bfmhhaY{\hbox to 0pt{$\widehat{\bfmY}$\hss}\widehat{\phantom{\raise 1.25pt\hbox{$\bfmY$}}}}
\def\til={{\widetilde =}}
\def\clC{{\cal C}}
\def\clN{{\cal N}}
\def\clR{{\cal R}}
\def\clX{{\cal X}}
 \def\FRAC#1#2#3{\genfrac{}{}{}{#1}{#2}{#3}}
\def\ddtp{{\mathchoice{\FRAC{1}{d^{\hbox to 2pt{\rm\tiny +\hss}}}{dt}}%
{\FRAC{1}{d^{\hbox to 2pt{\rm\tiny +\hss}}}{dt}}%
{\FRAC{3}{d^{\hbox to 2pt{\rm\tiny +\hss}}}{dt}}%
{\FRAC{3}{d^{\hbox to 2pt{\rm\tiny +\hss}}}{dt}}}}
\def\average#1,#2,{{1\over #2} \sum_{#1}^{#2}}
\def\eye(#1){{\bf(#1)}\quad}
\def\var{{\bV\sfa\sfr}}
\newtheorem{theorem}{{\bf Theorem}}
\newtheorem{proposition}{{\bf Proposition}}
\newtheorem{definition}{{\bf Definition}}
\newtheorem{remark}{{\bf Remark}}
\def\eq#1/{(\ref{e:#1})}
\newcommand{\inp}[2]{{\langle #1, #2 \rangle}}
\newcommand{\beqn}[1]{\notes{#1}%
\begin{eqnarray} \elabel{#1}}
\newcommand{\eeqn}{\end{eqnarray} }
\newcommand{\beq}[1]{\notes{#1}%
\begin{equation}\elabel{#1}}
\newcommand{\eeq}{\end{equation}}
\def\bdes{\begin{description}}
\def\edes{\end{description}}
\newcounter{rmnum}
\newcounter{anum}
\def\ass(#1:#2){(#1\ref{#1:#2})}
\def\ritem#1{
\item[{\sf \ass(\current_model:#1)}]
}
\newenvironment{recall-ass}[1]{%
\begin{description}
\def\current_model{#1}}{
\end{description}
}
\pgfplotsset{compat=newest}
\tikzset{toprule/.style={%
		execute at end cell={%
			\draw [line cap=rect,#1] (\tikzmatrixname-\the\pgfmatrixcurrentrow-\the\pgfmatrixcurrentcolumn.north west) -- (\tikzmatrixname-\the\pgfmatrixcurrentrow-\the\pgfmatrixcurrentcolumn.north east);%
		}
	},
	bottomrule/.style={%
		execute at end cell={%
			\draw [line cap=rect,#1] (\tikzmatrixname-\the\pgfmatrixcurrentrow-\the\pgfmatrixcurrentcolumn.south west) -- (\tikzmatrixname-\the\pgfmatrixcurrentrow-\the\pgfmatrixcurrentcolumn.south east);%
		}
	}
}
\def\herm{{\sfH}}
\def\snr{{\mathsf{SNR}}}
\def\cg{{\clC\clN}}
\long\def\comment#1{}
\newcommand{\xv}{{\bf x}}
\newcommand{\Xm}{{\bf X}}
\newcommand{\Xc}{{\cal X}}
\newcommand{\Phim}{\hbox{\boldmath$\Phi$}}
\newcommand{\trace}{{\hbox{tr}}}
\newcommand{\SNR}{{\sf SNR}}
\newcommand{\transp}{{\sf T}}
\author{\IEEEauthorblockN{Mahdi Nouri Boroujerdi\IEEEauthorrefmark{1}, Saeid Haghighatshoar\IEEEauthorrefmark{2}, \IEEEmembership{Member, IEEE}, Giuseppe Caire\IEEEauthorrefmark{2}, \IEEEmembership{Fellow, IEEE}}
	
	\IEEEauthorblockA{\IEEEauthorrefmark{1}{\small School of Electrical and Computer Engineering, University of Tehran,
		Tehran, Iran, Email: \{mhd.nouri\}@ut.ac.ir}}\\
	\IEEEauthorblockA{\IEEEauthorrefmark{2} {\small Communications and Information Theory Group, Technische Universit\"{a}t Berlin} \\[-10pt]
		{\small Email: \{saeid.haghighatshoar, caire\}@tu-berlin.de}}
	}
\title{Low-Complexity Statistically Robust Precoder/Detector Computation for Massive MIMO Systems\footnote{A short version of this paper was presented in the Workshop on Smart Antennas (WSA), Bochum, Germany, 2018 \cite{WSA_paper_published}.}}
\begin{document}

\maketitle

\vspace{-24pt} 

\begin{abstract}
Massive MIMO is a variant of multiuser MIMO in which the number of antennas at the base station (BS) $M$ is very large and typically much larger than the number of served users (data streams) $K$. Recent research  has widely investigated the system-level advantages of 
massive MIMO and, in particular, the beneficial effect of increasing the number of antennas $M$. 
These benefits, however, come at the cost of a dramatic increase in hardware  and  computational complexity. 
This is partly due to the fact that the BS needs to compute precoding/receiving vectors in order to coherently transmit/detect data to/from each user, 
where the resulting complexity grows proportionally to the number of antennas $M$ and the number of served users $K$. 
Recently, different algorithms based on tools from asymptotic random matrix theory and/or approximated message passing 
have been proposed to reduce such complexity. The underlying assumption in all these techniques, however, is that  the exact statistics (covariance matrix) of the channel vectors of the users is a priori known. This is far from being realistic, especially taking into account that, in the high-dim regime of $M \gg 1$, 
estimating the channel covariance matrices of the users is also challenging in terms of both computation and storage requirements. 
	In this paper, we propose a novel  technique for computing the precoder/detector in a massive MIMO system. 
	Our method is based on the randomized Kaczmarz algorithm and  does not require  a priori knowledge of the statistics of the users channel vectors. 
	We analyze the performance of our proposed algorithm theoretically and compare its performance with that of other techniques based on random matrix theory and approximate message passing via numerical simulations. Our results indicate that our proposed technique is computationally very competitive and yields quite a comparable  
	performance while it does not require the knowledge of the statistics of the users channel vectors. 
	\end{abstract}

\begin{IEEEkeywords}
Massive MIMO, Kaczmarz Algorithm, Detection, Precoding.
\end{IEEEkeywords}

\section{Introduction}
We consider a massive MIMO system where a base station (BS) is equipped with an array consisting of $M$  
antennas and serves $K$ users simultaneously using spatial multiplexing.  
In such a system, the spatial multiplexing gain of $K$ achieves a high sum spectral efficiency, while the large antenna diversity achieved by using 
$M \gg K$ antennas provides {\em channel hardening}, such that also the per-user rates are very stable and 
robust against statistical variations of the propagation environment \cite{Marzetta-TWC10,Intro_MM_Ashikhmin,Intro_MM_Rusek,Intro_MM_Larsson}. Using large number of antennas at the BS has many other interesting advantages: for instance, as the number of antennas tends to infinity, i.e. $M \to \infty$, the intra cellular interference vanishes so does the effect of noise and fast (small-scale) 
fading  \cite{Marzetta-TWC10}. Moreover, due to channel hardening and to the fact that for $M$ much larger than $K$ the users channel vectors tend to become almost mutually orthogonal, linear precoding/detection in the \textit{Uplink} (UL) and the \textit{Downlink} (DL) is sufficient to achieve a performance 
comparable to that of the optimal nonlinear transmitter/receiver \cite{Marzetta-TWC10,downlink_MM_linear,caire2003achievable,weingarten2006capacity}. 

In this paper, we focus on linear schemes such as  \textit{zero forcing} (ZF) precoder, also known as {\em decorrelator} (when used as BS receiver in the UL \cite{uplink_decorrelator,uplink_decorrelator_sync,uplink_decorrelator_async}), regularized ZF (RZF), where the regularization factor is tuned to 
the level of the noise at the users receivers, and \textit{linear minimum mean squared error} (MMSE) detector, 
which achieves the optimal Signal-to-Interference plus Noise Ratio (SINR)  among all linear receivers \cite{uplink_MMSE} (we refer  to \cite{MIMO_detection_survey} for a more detailed review of the available techniques for MIMO precoding and detection). 
Computing the corresponding linear precoding/detection matrices from the estimated users channel vectors 
requires matrix inversion and matrix-matrix multiplication, which requires $O(K^3)$ and $O(MK^2)$ operations, respectively. 
This imposes a very high computational complexity on massive MIMO implementations since the  number of BS antennas $M$ and the number 
of served users $K$ can be  very large. 
Several works have recently reported hardware chips that are designed to specifically compute beamforming matrices in massive MIMO, see e.g. \cite{MM-implementation} for a chip design for $M=128$ antennas and $K=8$ users. Many implementations of massive MIMO, however, are still using general purpose signal processing hardwares (such as DSP and FPGA). As a result,  a significant research effort has been recently devoted to developing efficient and low-complexity  algorithms for designing precoding/detection 
matrices in massive MIMO. Most of the works  exploit  underlying statistical properties of the channel states of the users and asymptotic results in random matrix theory to reduce the complexity. An important class of these algorithm use the \textit{truncated polynomial expansion} (TPE) technique \cite{Moshavi1996,muller2001design,Sessler2005, Zarei,Bjornson, 2017Benzin} in order to approximate the inverse matrix via a Taylor series expansion with a small number of terms. 
This reduces the complexity by a factor growing proportionally to the number of antennas $M$ (or users $K$). 
TPE-based techniques, however, require pre-calculation of the polynomial coefficients based on the statistics of the channel vectors, thus, require statistical information of the users channels. In particular, as seen in \cite{Zarei,Bjornson, 2017Benzin}, they require the knowledge of each users channel covariance matrix. 
This may not be practical, especially in massive MIMO regime where the covariance matrix of each user channel vector has very large dimension
$M \times M$ and needs to be estimated and tracked (i.e., updated over time) from the user channel measurements. Estimation and tracking of the users channel covariance matrix in the massive MIMO regime has also been intensely investigated in recent works (e.g., see \cite{haghighatshoar2017massive,haghighatshoar2016low}) and it is known to be quite a challenging and computationally demanding problem. 
A similar approach for linear MMSE detection consists of using a reduced rank filter, as in \cite{uplink_MMSE_reduced_rank}. 
In a recent line of work \cite{ghods2017optimally}, the \textit{approximate message passing} (AMP) algorithm, initially developed for sparse signal recovery in \cite{bayati2011dynamics}, has been used to reduce the computational complexity of massive MIMO precoding and detection.


In this paper, we propose a novel  technique for designing precoding/detection matrices in a massive MIMO system. 
Our method is based on the randomized Kaczmarz Algorithm (KA) and  does not require any knowledge of the statistics of the users' channels.
KA was initially proposed by Kaczmarz as an iterative technique for solving over-determined (OD) set of linear equations (SLE) \cite{Kaczmarz}. 
A randomized version of KA was recently proposed and analyzed for solving consistent OD SLE \cite{Strohmer2006}. 
With the advent of stochastic gradient techniques in machine learning, KA has been revitalized \cite{needell2014stochastic} and applied to other problems such as solving quadratic equations \cite{chi2016kaczmarz}. Our work in this paper is mainly based on the randomized KA proposed in \cite{Strohmer2006}. We extend this algorithm to the under-determined (UD) SLE  and use it to design new variants of KA for massive MIMO applications. 
We also analyze the performance of our proposed algorithm theoretically and compare its performance with that of other relevant techniques based on random matrix theory and approximate message passing via numerical simulations. Our results indicate that the proposed scheme has a comparable performance and is very competitive from a computational viewpoint,  while it does not require the knowledge of the statistics of the users channel vectors. 

\subsection{Notation}
We represent scalar constants by non-boldface letters {(e.g., $x$ or $X$)}, sets by calligraphic letters (e.g., $\Xc$),  vectors by boldface small letters (e.g., $\xv$), and matrices by boldface capital letters (e.g., $\Xm$). 
We denote the $i$-th row and the $j$-th column of a matrix $\bfX$ with the row-vector $\Xm_{i,.}$ and the column-vector $\Xm_{.,j}$ respectively. We represent the components of a vector $\bfx=(x_1, \dots, x_n)^\transp$ by $x_i$, $i=1,2, \dots, n$.
We denote a sequence of vectors by upper/lower indices, e.g., $\bfx^1, \bfx^2, \cdots$ or $\bfx_1, \bfx_2, \cdots$.
We indicate the Hermitian conjugate and the transpose of a matrix  $\bfX$ by $\bfX^\herm$ and $\Xm^\transp$ respectively, with the 
same notation being used for vectors and scalars. 
We denote the inner product between two matrices (and similarly two vectors) $\bfX$ and $\bfY$ 
by $\inp{\bfX}{\bfY}=\trace(\bfX^\herm \bfY)$. We use $\|\bfX\|_\sfF=\sqrt{\inp{\bfX}{\bfX}}$ for the Frobenius norm of a matrix $\bfX$ and $\|\bfx\|$ for the $l_2$-norm of a vector $\bfx$.
For two $m\times n$ and $m'\times n$ matrices $\bfX$ and $\bfY$, we denote by $[\bfX;\bfY]$ the $(m+m')\times n$ matrix obtained by stacking the rows of $\bfX$ on top of the rows of $\bfY$. 
The  identity matrix of order $p$ is represented by $\bfI_p$. For an integer $k>0$, we use the shorthand notation $[k]$ for $\{1,2,\dots, k\}$. 

\section{Problem Statement}\label{sec:definition}

Consider a multiuser massive MIMO system with a BS with $M\gg 1$ antennas serving $K$ single-antenna users.  We denote the \textit{true} channel vector of user $i \in [K]$ to the $M$ BS antennas with $\bfh^i \in \bC^M$, where $\bfh^i=(h^i_1, \dots, h^i_M)^\transp$ with $h^i_k$ denoting the complex channel gain to the  $k$-th antenna. We denote the \textit{true}  $M\times K$ channel matrix consisting of the channel vectors of all the users by $\bfH=[\bfh^1, \dots, \bfh^K]$. We also denote  a possibly quantized and noisy \textit{estimate} of this matrix available at the base station (BS) by the $M\times K$ matrix $\bfQ$. It is worthwhile to mention that, in this paper, we always assume that all the computations done at the BS are  based on the estimate matrix $\bfQ$.
In {\em time-division duplexing} (TDD) systems, the matrix $\bfQ$ is obtained during the training period of each channel 
coherence block\footnote{The channel coherence block is the region in the time-frequency plane where the small-scale fading process generating the coefficients of $\bfH$ can be considered constant. For a channel with physical coherence time $\Delta T_c$ and coherence bandwidth $\Delta W_c$, the channel coherence block spans approximately $\lceil \Delta T_c \times \Delta W_c \rceil$ complex signal dimensions.} 
by transmitting orthogonal pilots from the users to the BS in the UL (exploiting UL-DL channel reciprocity \cite{Marzetta-TWC10,shepard2012argos}).  
In {\em frequency-division duplexing} (FDD) systems, in contrast, the BS obtains the estimated channel matrix $\bfQ$  by 
 pilot transmission in the DL and some form of analog or quantized feedback 
from the users to the BS in the UL (see, e.g.,  \cite{caire2010multiuser}).


\subsection{Detection and precoding vectors  in the UL and the DL}\label{sec:up}

In the UL, all the $K$ users transmit their data symbols to the BS. The $M$-dim received vector at the BS  is given by:
\begin{equation}\label{eq:ul1}
\bfy=\bfH\bfs+\bfn,
\end{equation}
where \(\bfs=(s_1, \dots, s_K)^\transp \in \bC^K\) is the vector of symbols  sent by the $K$ users, where $\bfH$ is the true channel matrix of the users of size $M \times K$ as before,  and  where \(\bfn \sim \cg(0, \sigma^2 \bfI_M)\) is the additive zero-mean white Gaussian noise (AWGN) at the BS receiver.
Without loss of generality, we assume that the users symbols have mean zero and variance $\bE[|s_i|^2] = P$, and incorporate the pathloss effects due to possibly different distance of the users form the BS as part of the channel vectors $\{\bfh^k: k \in [K]\}$.  
As stated before,  we will consider only the case where the BS  detects symbols  \(\bfs\) from the observations  \(\bfy\) via a linear receiver. 
Two widely used linear detectors are zero forcing detector (ZFD) and minimum mean squared error detector (MMSED) which yield the following symbol estimates:
\begin{align}
\hat{\bfs}_\text{ZFD}&=(\bfQ^\herm\bfQ)^{-1}\bfQ^\herm\bfy\label{eq:ulZF},\\
\hat{\bfs}_\text{MMSED}&=(\bfQ^\herm\bfQ+\frac{1}{\SNR}\bfI_K)^{-1}\bfQ^\herm\bfy,\label{eq:ulMMSE}
\end{align}
where $\snr=\frac{P}{\sigma^2}$ denotes the per-user {\em transmit} \textit{Signal-to-Noise Ratio} (SNR).
In the case of massive MIMO, where $K$ is very large (e.g., some tens) and $M$ is even much larger (e.g., some hundreds),
even implementing these simple linear detectors is computationally challenging since they require  large-dim matrix-matrix multiplication 
which needs roughly $O\left( MK^2\right)$ operations, and matrix inversion which requires approximately $O\left(K^3\right)$ operations. Note that as an alternative to ZFD and MMSED in \eqref{eq:ulZF} and \eqref{eq:ulMMSE} respectively, one can apply  Maximum Ratio Combining (MRC) $\hat{\bfs}=\bfQ^\herm\bfy$ for detecting the users symbols in the UL. MRC  requires only matrix-vector multiplication and, hence, needs roughly $O(MK)$ operations, which is much less than that needed for ZFD or MMSED. However, in cases where high spectral efficiency is required (high-SNR conditions) or where inter-cell interference is not the dominating effect, MRC performs much worse than ZFD and MMSED in terms of achievable per-user rates, unless the number of antennas is very large \cite{RZF_Hoydis, Huh11, ZF_MRC_MM}
	\footnote{In cases where these conditions are not met, i.e., when the 
		system is dominated by inter-cell interference and/or when the CSI quality is poor due to high mobility, the advantage of ZFD/MMSED over MRC may disappear, 
		as shown for example in \cite[Section 6.3]{fundamental_MM}. In those cases, it is obvious that MRC becomes a simple and attractive alternative to any more sophisticated detection scheme.}.
%
%

A similar computational issue arises in the DL, where the BS transmits data to the users by coherently beamforming data to the users via suitable linear precoding matrices. We   consider  linear precoding schemes such as zero forcing beamforming (ZFBF) and regularized zero forcing beamforming (RZFBF), 
which have been shown to  achieve near-optimal performance in massive MIMO \cite{downlink_MM_linear}. The precoding vectors in the DL can be computed according to \cite{RZF_Hoydis,RZF_Peel} as:
\begin{align}
\mathbf{x}_{\text{ZFBF}}&=\beta_\text{ZFBF}\bfQ(\bfQ^\herm \bfQ)^{-1}\bfs\label{eq:dlZF}\\
\bfx_{\text{RZFBF}}&=\beta_\text{RZFBF} \bfQ(\bfQ^\herm \bfQ+\xi \bfI)^{-1}\bfs\label{eq:dlMMBF},
\end{align}
where $\beta_\text{ZFBF}$ and $\beta_\text{RZFBF}$ are scaling parameters to impose a given BS average transmit power, and where $\xi$ is the regularization parameter \cite{RZF_Peel}. Each user $k \in [K]$ will then receive the following signal
\begin{equation}
y_k={\bfh^k}^\herm\bfx+n_k,
\end{equation}
where $\bfx$ can  be either $\bfx_{\text{ZF}}$ or $\bfx_{\text{RZFBF}}$, and where $n_k\sim\cg(0, \sigma^2)$ is the AWGN at the user.
Computing precoding vectors of the ZFBF and RZFBF in \eqref{eq:dlZF} and \eqref{eq:dlMMBF} respectively has the same order of complexity 
as in the corresponding UL scenario \eqref{eq:ulZF} and \eqref{eq:ulMMSE}.

\subsection{Low-Complexity Beamforming via Kaczmarz Algorithm}\label{sec:KA_intro}
Our goal in this paper is to reduce the complexity of the detector/precoder computation in  the massive MIMO. 
We first show that detecting the users signals in the UL, i.e., $\hat{\bfs}_\text{ZFD}, \hat{\bfs}_\text{MMSED}$ in \eqref{eq:ulZF} and \eqref{eq:ulMMSE}, or finding the suitable $M$-dim precoding signal to be the transmitted to the users in the DL, i.e., $\mathbf{x}_{\text{ZFBF}}, \bfx_{\text{RZFBF}}$ in \eqref{eq:dlZF} and \eqref{eq:dlMMBF}, can be posed as  finding the solution $\bfw^*$ of an SLE of the form $\bfA \bfw=\bfb$ where  $\bfA$ depends on the estimated channel state $\bfQ$ and $\bfb$  depends on the users signal $\bfs$ in a DL  and on the noisy received  signal from the users $\bfy$ as in \eqref{eq:ul1} in a UL scenario. We find the optimal solution $\bfw^*$ of $\bfA \bfw=\bfb$ via an iterative procedure known as Kaczmarz Algorithm (KA). This technique was initially proposed by  Kaczmarz for efficiently solving a consistent overdetermined (OD) SLE \cite{Kaczmarz}. 

KA works as follows. At the start of each iteration $t$, the algorithm has an estimate $\bfw^t$ of the optimal solution $\bfw^*$ and selects one of the rows, say row number $r(t)$, of $\bfA$. If the estimate $\bfw^t$ satisfies the $r(t)$-th equation, namely, if $\inp{\bfa_{r(t)}}{\bfw^t}= b_{r(t)}$, where $\bfa_{r(t)}=(\bfA_{r(t),.})^\herm$ denotes the conjugate of the $r(t)$-th row of $\bfA$, then KA keeps $\bfw^t$ as it is. Otherwise, it updates $\bfw^t$ along $\bfa_{r(t)}$ to make the $r(t)$-th equation consistent. This is summarized in Algorithm \ref{tab:Kac}. Further insight into the performance of KA is gained by noting  that each Kaczmarz update indeed obeys the \textit{minimum energy perturbation} principle: At each iteration $t$, KA finds the closest vector $\bfw$ (in $l_2$-norm) to the current solution $\bfw^t$ that satisfies the selected equation or mathematically speaking $\bfw^{t+1}=\argmin_{\bfw} \|\bfw - \bfw^t\|^2 \text{ s.t. } \inp{\bfa_{r(t)}}{\bfw}= b_{r(t)}$.
%
%
%
%
\begin{algorithm}[t]
	\caption{ Kaczmarz Algorithm} 
	\label{tab:Kac} 
	\begin{algorithmic}[1]
		\State Initialize $\bfw^0$.
		\For{$t=0,1,\dots,T-1$}
		\State {Pick a  row $r(t)$ of $\bfA$ denoted by the row vector $\bfa_{r(t)}^\herm$. 
		
		\Comment{In \cite{Strohmer2006}, each row $i$ of the matrix $\bfA$ is selected randomly with the probability} $\frac{\|\bfa_i\|^2}{\|\bfA\|_\sfF^2}$. }
		\State Update $\bfw^t$ as $\bfw^{t+1}=\bfw^t+\frac{ b_{r(t)}- \inp{\bfa_{r(t)}}{ \bfw^t} }{\| \bfa_{r(t)}\|^2}\bfa_{r(t)}$. 
		
		 \Comment{After the
				update, $\inp{\bfa_{r(t)}}{\bfw^{t+1}}= b_{r(t)}$, and $r(t)$-th
				equation is fulfilled.} 
		\EndFor
	\end{algorithmic}
\end{algorithm}

The convergence speed of KA  is known to depend highly on the order of selecting the rows of $\bfA$, 
referred to as the {\em update schedule}. In particular, the algorithm can be quite slow under a round-robin  schedule that selects the rows periodically. This has motivated the study of randomized variants of KA in which the rows of $\bfA$ are selected randomly according to some probability distribution 
that might depend on $\bfA$.  Such a randomized version of KA was recently proposed and analyzed by Strohmer and Vershynin in \cite{Strohmer2006}, where they showed that the performance of the resulting randomized KA depends on the condition number of the matrix $\bfA$. 
It was also shown in  \cite{Strohmer2006} that, for a wide class of matrices, the proposed algorithm can potentially outperform other  
well-known techniques such as {\em conjugate gradient}. 

The traditional KA and also the randomized variant proposed in \cite{Strohmer2006} are applicable to OD but consistent cases, where there is a solution satisfying all the equations. For massive MIMO applications considered in this paper, in contrast, we will need variants of KA for both over-determined and under-determined scenarios. Moreover, in the over-determined cases the equations are almost always inconsistent. For example, as we will see, this inconsistency arises in a UL scenario due to the presence of noise in the received signal at the BS. In fact,  massive MIMO linear detection in such cases requires finding the least-squares solution of an appropriate  SLE. However,  an ad-hoc application of KA results in a residual term and does not yield the  desired solution \cite{Needell2010}. We investigate the effect of this residual term in the massive MIMO setup in Section \ref{sec:results:residual} via numerical simulations, where we illustrate that it generally incurs a considerable loss in per-user rate. Also, to avoid this effect, we develop a new variant of KA that always solves consistent equations after a suitable transformation, where we also show that the resulting updates of the new algorithm converge to the desired least-squares solution. Our method resembles \cite{extendedKaczmarz}, which uses a step prior to KA to remove the inconsistency of the equations but has much less computational complexity.


\section{Unified Mathematical Framework}\label{sec:math frame}
In this section, we propose a unified approach that we will use to analyze the performance of all variants of KA studied in the sequel.
We will focus on the SLE given by $\bfA \bfx=\bfb$ where $\bfA\in \bC^{m\times n}$ is the matrix of linear equations, where $\bfx \in \bC^n$ is the vector of unknown values, and where $\bfb\in \bC^m$ is the vector of known coefficients.  We say that the SLE $\bfA \bfx=\bfb$ is \textit{consistent} if it has a solution.  We always assume that $\bfA$ has full row-rank in the UD case ($m<n$) and full column-rank in the OD case ($m\geq n$). As we will see, 
these conditions are satisfied (with probability $1$) in  massive MIMO applications studied in this paper.

We will consider the randomized version of KA explained in Algorithm\,\ref{tab:Kac}, where at each iteration the algorithm selects 
a row of  $\bfA$ randomly and independently from the previous iterations and according to a given probability distribution $\bfp=(p_1, \dots, p_m)^\transp$, where $p_i\in [0,1]$ is the probability of selecting row $i\in [m]$ and where $\sum_{i\in [m]}p_i=1$. We denote the random rank-1 projection operator produced by the random row selection by $\scrP_R=\frac{\bfa_R \bfa_R^\herm}{\|\bfa_R\|^2} \in \bC^{n\times n}$, where $R \in [m]$ is a random variable denoting the index of the selected row, where $\bP[R=i]=p_i$ and where we denote by  $\bfa_i=(\bfA_{i,.})^\herm \in \bC^m$ the conjugate of the $i$-th row of $\bfA$. We define the average of the projection  operator $\scrP_R$ as 
\begin{align}\label{avg_proj}
\overline{\scrP}=\bE_R[\scrP_R]=\sum_{i=1}^m p_i \frac{\bfa_i \bfa_i^\herm}{\|\bfa_i\|^2}.
\end{align}
Note that $\overline{\scrP}$ is an $n \times n$ \textit{positive semi-definite} (PSD) matrix, with $\tr(\overline{\scrP})=\sum_{i \in [m]}p_i=1$.
As we will see in the next sections:

(1) We will design KA such that, depending on the specific scenario, all the estimates generated across all the iterations of KA  belong to the subspace  $\clX \subset \bC^n$ generated by conjugate of the rows of  $\bfA$, i.e., the column span of $\bfA^\herm$. 

(2) The crucial parameter that will control the convergence speed of KA will be the average gain of the matrix $\bfA$ over the subspace $\clX$ defined  as follows.
\begin{definition}[average gain]\label{avg_gain}
	Let $\bfA$ be an $m\times n$ matrix.  Let $\bfp \in \bR_+^m$ be a probability distribution over the rows of $\bfA$ and define the average projection operator $\overline{\scrP}$ produced by the rows of $\bfA$ as in \eqref{avg_proj}. Let $\clX$ be the subspace produced by the span of columns of $\bfA^\herm$. The average gain of the matrix $\bfA$ over the subspace $\clX$ is defined as:
	\begin{align}
	\kappa_{\clX}(\bfA, \bfp)=\min_{\bfx\in \clX, \bfx \neq 0} \frac{\bfx^\herm \overline{\scrP} \bfx}{\|\bfx\|^2}.
	\label{eq:opt_gain}
	\end{align}
\end{definition}
%
%

\begin{figure*}[t]
\def \mywidth {0.68}
\def \myshiftx {0.47}
\def \myshiftxx {0.94}
\begin{tikzpicture}
\pgfplotsset{every tick label/.append style={font=\small}}

\begin{scope}[scale=\mywidth]
\begin{axis}[%
xmin=0,
xmax=10,
ymin=0,
ymax=0.165,
ylabel={Eigen Values of $\overline{\scrP}$},
title={$M=256$ antennas, $K=10$ users},
title style={font=\small},
legend style={at={(1,1)}, anchor=north east, legend cell align=left, align=left, font=\small},
tick label style={/pgf/number format/fixed},
]
\addplot[ycomb, mark=o, line width=1pt, mark options={solid}] table[row sep=crcr] {%
1	0.100000000000065\\
2	0.100000000000043\\
3	0.100000000000036\\
4	0.100000000000018\\
5	0.100000000000003\\
6	0.0999999999999882\\
7	0.0999999999999712\\
8	0.0999999999999665\\
9	0.0999999999999503\\
10	0.0999999999999378\\
};

\addlegendentry{eigen values}

\addplot [dashed, color=red, line width=2pt]
table[row sep=crcr]{%
	1	0.0999999999999378\\
	10	0.0999999999999378\\
};
\addlegendentry{optimal $\kappa_\clX(\bfA, \bfp^*)$}

\addplot [line width=2 pt, dotted]
table[row sep=crcr]{%
1	0.1\\
10	0.1\\
};
\addlegendentry{upper bound $\frac{1}{K}$}

\addplot [color=blue, line width=1.5pt]
table[row sep=crcr]{%
1	0.076794301657377\\
10	0.076794301657377\\
};
\addlegendentry{suboptimal $\kappa_\clX(\bfA)$}

\end{axis}

\end{scope}

\begin{scope}[scale=\mywidth, xshift=\myshiftx \textwidth]
\begin{axis}[%
xmin=0,
xmax=15,
ymin=0,
ymax=0.11,
title={$M=256$ antennas, $K=15$ users},
title style={font=\small},
legend style={at={(1,1)}, anchor=north east, legend cell align=left, align=left, font=\small},
tick label style={/pgf/number format/fixed},
]
\addplot[ycomb, mark=o, line width=1pt, mark options={solid}] table[row sep=crcr] {%
1	0.0881117007687935\\
2	0.0853147246746934\\
3	0.0782959666671955\\
4	0.0730408619455419\\
5	0.0638515842338909\\
6	0.0628422639783677\\
7	0.0609492155530005\\
8	0.0609492116331862\\
9	0.0609492106945713\\
10	0.0609492104845745\\
11	0.0609492100808743\\
12	0.0609492100112839\\
13	0.0609492098761832\\
14	0.0609492098087568\\
15	0.0609492096527547\\
};

\addlegendentry{eigen values}

\addplot [dashed, color=red, line width=2pt]
table[row sep=crcr]{%
	1	0.0609492096527547\\
	15	0.0609492096527547\\
};
\addlegendentry{optimal $\kappa_\clX(\bfA, \bfp^*)$}

\addplot [line width=2 pt, dotted]
  table[row sep=crcr]{%
1	0.0666666666666667\\
15	0.0666666666666667\\
};
\addlegendentry{upper bound $\frac{1}{K}$}

\addplot [color=blue, line width=1.5pt]
  table[row sep=crcr]{%
1	0.0433450111505628\\
15	0.0433450111505628\\
};
\addlegendentry{suboptimal $\kappa_\clX(\bfA)$}

\end{axis}

\end{scope}

\begin{scope}[scale=\mywidth, xshift=\myshiftxx \textwidth]
\begin{axis}[%
xmin=0,
xmax=25,
ymin=0,
ymax=0.08,
title={$M=256$ antennas, $K=25$ users},
title style={font=\small},
legend style={at={(1,1)}, anchor=north east, legend cell align=left, align=left, font=\small},
tick label style={/pgf/number format/fixed},
]
\addplot[ycomb, mark=o, line width=1pt, mark options={solid}] table[row sep=crcr] {%
1	0.0699105656763463\\
2	0.0670705094126494\\
3	0.0609073744997444\\
4	0.0571788589853565\\
5	0.0534821473628355\\
6	0.049340016711708\\
7	0.0481361167342808\\
8	0.0451046914386902\\
9	0.0425297514042087\\
10	0.0408144680482412\\
11	0.0403138701740785\\
12	0.0373408332249105\\
13	0.0359269021485594\\
14	0.0342813765473377\\
15	0.0323790457746528\\
16	0.0305499928204447\\
17	0.0283037229360282\\
18	0.0283037196786197\\
19	0.0283037195887917\\
20	0.0283037195464017\\
21	0.0283037194930348\\
22	0.0283037194652518\\
23	0.0283037194555124\\
24	0.0283037194412078\\
25	0.0283037194311065\\
};

\addlegendentry{eigen values}

\addplot [dashed, color=red, line width=2pt]
table[row sep=crcr]{%
	1	0.0283037194311065\\
	25	0.0283037194311065\\
};
\addlegendentry{optimal $\kappa_\clX(\bfA, \bfp^*)$}

\addplot [line width=2 pt, dotted]
table[row sep=crcr]{%
1	0.04\\
25	0.04\\
};
\addlegendentry{upper bound $\frac{1}{K}$}

\addplot [color=blue, line width=1.5pt]
table[row sep=crcr]{%
1	0.0213267325207972\\
25	0.0213267325207972\\
};
\addlegendentry{suboptimal $\kappa_\clX(\bfA)$}

\end{axis}

\end{scope}

\end{tikzpicture}%

\caption{\small Comparison of the value $\kappa_{\clX}(\bfA, \bfp^*)$ for the optimal probability distribution $\bfp^*$ (obtained by solving the SDP) with that of the suboptimal one for a matrix $\bfA$ with i.i.d. $\cg(0,1)$ components and with $M=256$ rows (BS antennas) and $K \in \{10, 15, 25\}$ columns (users). The plots illustrate the eigen-value distribution of the average operator $\overline{\scrP}$ for the optimal distribution $\bfp^*$, where the minimum eigen value corresponds to $\kappa_{\clX}(\bfA,\bfp^*)$. The plots also illustrate the upper bound $\frac{1}{K}=\frac{1}{\min\{M,K\}}$ and the suboptimal $\kappa_\clX(\bfA)$ achieved with the suboptimal distribution $\bfp$ with $p_i=\frac{\|\bfa_i\|^2}{\|\bfA\|_\sfF^2}$. It is seen that for a matrix $\bfA$ with i.i.d. $\cg(0,1)$ components, the suboptimal distribution has also a good $\kappa_\clX(\bfA)$, which is quite close to the best $\kappa_{\clX}(\bfA, \bfp^*)$.}
\label{fig:opt_subopt}

\end{figure*}


\vspace{1cm}

Note that $\kappa_{\clX}(\bfA, \bfp)$ depends both on the matrix $\bfA$  and the probability distribution $\bfp$. Moreover, as $\tr(\overline{\scrP})=1$ and as the subspace $\clX$ has dimension $\min\{m,n\}$ and $\overline{\scrP}$ acts on this subspace, we have that $\kappa_{\clX}(\bfA, \bfp) \in [0,\frac{1}{\min\{m,n\}}]$. 
We will show that the closer $\kappa_\clX(\bfA, \bfp)$  is to $\frac{1}{\min\{m,n\}}$, the faster KA converges across the iterations. 
Note that from 
\eqref{eq:opt_gain}, for a given matrix $\bfA$, $\kappa_{\clX}(\bfA, \bfp)=\lambda_{\min}(\overline{\scrP}| \clX)$ is given by the minimum eigen-value of $\overline{\scrP}$ over the subspace $\clX$. 
As $\overline{\scrP}$ is a linear function of $\bfp$ and as $\lambda_{\min}(\overline{\scrP}| \clX)$ is a concave function of $\overline{\scrP}$ \cite{tao2012topics}, $\kappa_{\clX}(\bfA, \bfp)$ is a concave function of $\bfp$ over the convex set of probability distributions over the row set of $\bfA$. Therefore, finding the best distribution $\bfp$ maximizing $\kappa_{\clX}(\bfA, \bfp)$ can be posed as maximizing a concave function over a convex set. This can be formulated as a \textit{semi-definite programming} (SDP) and solved with an affordable complexity via convex optimization techniques. However, this typically incurs the same order of complexity as directly computing the detection/precoding vectors and is not suitable for the massive MIMO applications  addressed in this paper. This might be, however, very useful for situations where one solves many equations with the same matrix $\bfA$, whereby the complexity of computing the optimal $\bfp$ is amortized over time. This is not the case in massive MIMO applications since the matrix $\bfA$ depends on the estimated channel matrix $\bfQ$, which changes from one coherence time to another.  Instead, we will use the suboptimal probability distribution $\bfp$ proposed by \cite{Strohmer2006}, where $p_i=\frac{\|\bfa_i\|^2}{\|\bfA\|_\sfF^2}$ and the probability of selecting each row $i\in [m]$ scales proportionally to its  squared  $l_2$-norm $\|\bfa_i\|^2$. Note that computing this 
probability distribution incurs a complexity of $O(mn)$, which is definitely much lower than directly solving an SDP to find the optimal $\bfp$. 
Replacing this probability distribution in \eqref{avg_proj} results in the following 
average operator:
\begin{align}\label{avg_proj2}
\overline{\scrP}&=\bE_R[\scrP_R]=\sum_{i=1}^m p_i \frac{\bfa_i \bfa_i^\herm}{\|\bfa_i\|^2}=\sum_{i=1}^m \frac{\|\bfa_i\|^2}{\|\bfA\|_\sfF^2} \frac{\bfa_i \bfa_i^\herm}{\|\bfa_i\|^2}=\sum_{i=1}^m \frac{\bfa_i \bfa_i^\herm}{\|\bfA\|_\sfF^2}=\frac{\bfA^\herm \bfA}{\|\bfA\|_\sfF^2}.
\end{align}
From  Definition \ref{avg_gain}, this yields the following gain
parameter for the matrix $\bfA$, which will be used in the rest of the paper.
\begin{definition}\label{min_gain}
	Let $\bfA$ be the $m\times n$ matrix as defined before and let $\clX$ be a linear subspace of $\bC^n$ produced by the column span of $\bfA^\herm$. We define the normalized minimum gain of the matrix $\bfA$ along the subspace $\clX$ as: 
	\begin{align}
	\kappa_\clX(\bfA)=\min_{\bfx\in \clX, \bfx \neq 0} \frac{\|\bfA \bfx\|^2}{\|\bfA\|_\sfF^2 \|\bfx\|^2},
	\end{align}
\hfill $\lozenge$
	\end{definition}

	Note that $\kappa_\clX(\bfA) \in [0,\frac{1}{\min\{m,n\}}]$ as before, and the closer $\kappa_\clX(\bfA)$ becomes to $\frac{1}{\min\{m,n\}}$, the faster KA converges across the iterations. Fig.\,\ref{fig:opt_subopt} compares the sub-optimal gain $\kappa_\clX(\bfA)$ with the best average gain $\kappa_{\clX}(\bfA, \bfp^*)$ obtained by the optimal probability distribution $\bfp^*$. We consider a massive MIMO scenario where  the matrix $\bfA$ has  i.i.d. $\cg(0,1)$ components with $M=256$ rows corresponding to BS antennas but different number of columns $K\in \{10,15,25\}$ corresponding to the number of served users. KA for such a matrix $\bfA$ arises in an UL scenario where the users have spatially white channel vectors. 
From well-known results in random matrix theory \cite{random_mtx_verdu}, it is not difficult to show that with a very high probability (on the realization of the matrix $\bfA$) 
\begin{align}
\kappa_\clX(\bfA)\approx\frac{(\sqrt{M}-\sqrt{K})^2}{MK}\label{dumm_rev}
=\frac{(1-\sqrt{\rho})^2}{K},
\end{align}
where $\rho:=\frac{K}{M}$ denotes the ratio between the number of served users $K$ and the number of BS antennas $M$, typically called the \textit{loading factor} of the massive MIMO system.
Considering the fact that $\kappa_{\clX}(\bfA, \bfp^*)\leq \frac{1}{\min\{M,K\}}=\frac{1}{K}$, it is seen from \eqref{dumm_rev} that, for a fixed loading factor $\rho$, the matrix gain $\kappa_\clX(\bfA)$ gives an approximation of the best gain $\kappa_{\clX}(\bfA, \bfp^*)$ up to a  multiplicative factor (see also Fig.\,\ref{fig:opt_subopt}).  
	
This, as we will see, results in a KA that converges slower than the optimally-tuned KA up to  the same multiplicative factor.
	
	For the rest of the paper, we will always use the suboptimal row distribution, which can be computed in $O(mn)$ iterations.
	We now prove the following result that gives a unified performance guarantee for all the cases that will be considered in this paper.
	\begin{theorem}\label{main_thm}
		Let $\bfA$ be an $m \times n$ matrix  and let $\bfb \in \bC^m$. Let $\clX$ be the subspace of $\bC^n$ generated by the column span of $\bfA^\herm$ and suppose that $\bfA \bfx = \bfb$ has a solution $\bfx^* \in \clX$. Let $\bfx^0 \in \clX$ be an arbitrary initialization 
		  and let $\bfx^t$ be the estimate at time $t$  of the randomized KA with the suboptimal distribution as explained before starting from $\bfx^0$.  Then,
		\begin{align}
		\bE\big [\|\bfx^t - \bfx^*\|^2\big ] \leq (1-\kappa_\clX(\bfA))^t \|\bfx^0 - \bfx^*\|^2,
		\end{align}
		where the expectation is taken over the randomized row selection in KA, and where $\kappa_\clX(\bfA)$ denotes the minimum gain of the linear operator $\bfA$ over the subspace $\clX$ as in Definition \ref{min_gain}. \hfill $\square$
		\end{theorem}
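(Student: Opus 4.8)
The plan is to track the evolution of the error vector $\bfx^t - \bfx^*$ and show that its expected squared norm contracts by the factor $(1-\kappa_\clX(\bfA))$ at every iteration. First I would exploit the consistency hypothesis: since $\bfx^*$ solves $\bfA\bfx=\bfb$, the equation selected at step $t$ reads $b_{r(t)} = \inp{\bfa_{r(t)}}{\bfx^*}$, so the update in Algorithm \ref{tab:Kac} can be rewritten purely in terms of the error as
\begin{align}
\bfx^{t+1} - \bfx^* = \left(\bfI - \scrP_{r(t)}\right)(\bfx^t - \bfx^*),
\end{align}
where $\scrP_{r(t)} = \frac{\bfa_{r(t)}\bfa_{r(t)}^\herm}{\|\bfa_{r(t)}\|^2}$ is the rank-one orthogonal projection onto the selected row direction. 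Because $\scrP_{r(t)}$ is an orthogonal projection, $\bfI - \scrP_{r(t)}$ is the complementary orthogonal projection, and the Pythagorean identity yields the exact one-step energy balance
\begin{align}
\|\bfx^{t+1} - \bfx^*\|^2 = \|\bfx^t - \bfx^*\|^2 - \|\scrP_{r(t)}(\bfx^t - \bfx^*)\|^2.
\end{align}

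Next I would take the expectation over the random choice of $r(t)$, conditioned on $\bfx^t$. Using $\scrP_{r(t)}^\herm \scrP_{r(t)} = \scrP_{r(t)}$ together with the definition \eqref{avg_proj} of the averaged operator, the conditional expectation of the subtracted term collapses to a single quadratic form in $\overline{\scrP}$:
\begin{align}
\bE\big[\|\scrP_{r(t)}(\bfx^t - \bfx^*)\|^2 \,\big|\, \bfx^t\big] = (\bfx^t - \bfx^*)^\herm \,\overline{\scrP}\, (\bfx^t - \bfx^*).
\end{align}
To lower-bound this by $\kappa_\clX(\bfA)\,\|\bfx^t-\bfx^*\|^2$ through Definition \ref{min_gain}, I must first guarantee that the error vector lies in the subspace $\clX$ over which $\kappa_\clX$ is defined.

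Establishing this subspace invariance is the main obstacle, and I would handle it by induction: the base case holds since $\bfx^0, \bfx^* \in \clX$ by hypothesis, and the inductive step follows because the update changes the error only by adding a scalar multiple of $\bfa_{r(t)}$, which is a conjugated row of $\bfA$ and hence lies in the column span of $\bfA^\herm$, i.e.\ in $\clX$. This invariance is precisely what makes the subspace-restricted gain $\kappa_\clX(\bfA)$ — rather than the global minimum eigenvalue of $\overline{\scrP}$ — the correct contraction factor: in the under-determined regime $m<n$ the operator $\overline{\scrP}=\bfA^\herm\bfA/\|\bfA\|_\sfF^2$ of \eqref{avg_proj2} is singular on all of $\bC^n$, so any bound based on its global minimum eigenvalue would be vacuous, whereas the restriction to $\clX$ keeps the Rayleigh quotient strictly positive.

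With invariance secured, Definition \ref{min_gain} gives $(\bfx^t-\bfx^*)^\herm \overline{\scrP}(\bfx^t-\bfx^*) \geq \kappa_\clX(\bfA)\,\|\bfx^t-\bfx^*\|^2$, whence the one-step contraction
\begin{align}
\bE\big[\|\bfx^{t+1}-\bfx^*\|^2 \,\big|\, \bfx^t\big] \leq \left(1-\kappa_\clX(\bfA)\right)\|\bfx^t-\bfx^*\|^2.
\end{align}
Finally I would take full expectations and apply the tower property to chain this estimate over $t$ iterations, yielding the claimed geometric decay $\bE\big[\|\bfx^t-\bfx^*\|^2\big]\leq (1-\kappa_\clX(\bfA))^t\|\bfx^0-\bfx^*\|^2$. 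The remaining steps beyond the invariance argument are the standard Strohmer--Vershynin projection estimates and require no further subtlety.
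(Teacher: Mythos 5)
Your proposal is correct and follows essentially the same route as the paper's proof: the identical error recursion $\bfx^{t+1}-\bfx^*=(\bfI_n-\scrP_{r(t)})(\bfx^t-\bfx^*)$ obtained from consistency, the same one-step energy identity (your Pythagorean formulation is just the paper's idempotency computation $(\bfI_n-\scrP_{r(t)})^\herm(\bfI_n-\scrP_{r(t)})=\bfI_n-\scrP_{r(t)}$ in disguise), the same conditional expectation collapsing to the quadratic form in $\overline{\scrP}$, and the same chaining via the tower property. The only difference is that you make explicit, via induction on the update direction $\bfa_{r(t)}\in\clX$, the invariance of the error vector in $\clX$ — a step the paper's Theorem~\ref{main_thm} proof leaves implicit and only spells out later (in the proof of Proposition~\ref{prop:MMSED_ul_2}) — which is a sound and slightly more careful rendering of the same argument.
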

		
		\begin{proof}
			From the update equation of KA, we have:
			\begin{align}
			\bfx^{t+1}- \bfx^*&=\bfx^t - \frac{\inp{\bfa_{r(t)}}{\bfx^t} - b_{r(t)}}{\|\bfa_{r(t)}\|^2} \bfa_{r(t)} - \bfx^*=\bfx^t - \bfx^* - \frac{\inp{\bfa_{r(t)}}{\bfx^t - \bfx^*}}{\|\bfa_{r(t)}\|^2} \bfa_{r(t)}\nonumber\\
			&=\big(\bfI_n - \frac{\bfa_{r(t)} \bfa_{r(t)}^\herm }{\|\bfa_{r(t)}\|^2}\big) (\bfx^t - \bfx^*)=(\bfI_n-\scrP_{r(t)})(\bfx^t - \bfx^*),
			\end{align}
			where $r(t) \in [m]$ denotes the index of the random row selected by KA at iteration $t$, where $b_{r(t)}$ is the component of $\bfb$ at position $r(t)$, where we used the fact that $\bfx^*$ is a solution of $\bfA \bfx=\bfb$, thus, $b_{r(t)}=\inp{\bfa_{r(t)}}{\bfx^*}$, and where $\scrP_{r(t)}:=\frac{\bfa_{r(t)} \bfa_{r(t)}^\herm}{\|\bfa_{r(t)}\|^2}$ is the random rank-1 projection operator onto the subspace spanned by $\bfa_{r(t)}$ as defined before. It is not difficult to check that $\bfI_n-\scrP_{r(t)}$ is also a projection operator, thus, $(\bfI_n-\scrP_{r(t)})^\herm (\bfI_n-\scrP_{r(t)})=(\bfI_n-\scrP_{r(t)})$. This yields: 
			\begin{align}
			\|\bfx^{t+1} - \bfx^*\|^2=(\bfx^{t}-\bfx^*)^\herm (\bfI_n-\scrP_{r(t)}) (\bfx^{t}-\bfx^*).
			\end{align}
			Note that  $\bfx^{t}-\bfx^*$  is independent of $\bfI_n-\scrP_{r(t)}$ since it depends on the random selection of the rows at iterations before $t$. Using this independence, conditioning on all row selections before $t$, and taking the expected value over $r(t)$ yields: 
			\begin{align}
			\bE\Big [&\|\bfx^{t+1}-\bfx^*\|^2 \big |r(1), \dots, r(t-1)\Big ]
			=(\bfx^t - \bfx^*)^\herm \bE[\bfI_n - \scrP_{r(t)}] (\bfx^t - \bfx^*)\\
			&=(\bfx^t - \bfx^*)^\herm (\bfI_n - \overline{\scrP}) (\bfx^t - \bfx^*)
			= \|\bfx^t - \bfx^*\|^2 - (\bfx^t - \bfx^*)^\herm \overline{\scrP} (\bfx^t - \bfx^*)\\
			&\stackrel{(a)}{\leq} (1-\kappa_\clX(\bfA))\|\bfx^t - \bfx^*\|^2,
			\end{align}
			where in $(a)$ we used the fact that the minimum gain of $\overline{\scrP}$ for the mentioned suboptimal distribution is given by $\kappa_\clX(\bfA)$ as in Definition \ref{min_gain}.
			Finally, taking expected value with respect to $r(1), \dots, r(t-1)$ and applying the induction we obtain:
			\begin{align}
			\bE\big[\|\bfx^{t+1}-\bfx^*\|^2\big]& \leq \big(1-\kappa_\clX(\bfA) \big)\bE\big[\|\bfx^{t}-\bfx^*\|^2\big] \leq \big(1-\kappa_\clX(\bfA) \big)^{t+1} \|\bfx^0 - \bfx^*\|^2,
			\end{align}
			where we used the fact that the initial point $\bfx^0$ is selected deterministically and dropped the expectation. This completes the proof.
			\end{proof}

			\begin{remark}
				From Theorem \ref{main_thm}, it is seen that the convergence speed of KA depends on the parameter $\kappa_\clX(\bfA)$, where a larger $\kappa_\clX(\bfA)$ guarantees a faster convergence. Moreover, since $\kappa_\clX(\bfA) \leq \frac{1}{\min\{m,n\}}$, at least $O(\min\{m,n\})$  iterations are needed for the convergence.  \hfill $\lozenge$
				\end{remark}
				
				\begin{remark}
					Since the random variable $\|\bfx^t - \bfx^*\|^2$ is non-negative, from the well-known Markov's inequality \cite{grimmett2001probability}, its expected value also provides a probabilistic bound on its tail behavior, i.e., for any $\zeta>0$, we have: 
					\begin{align}
					\bP\big [\|\bfx^t - \bfx^*\|^2 \geq \zeta\big] &\leq \frac{\bE\big[\|\bfx^{t}-\bfx^*\|^2\big]}{\zeta} \leq \big(1-\kappa_\clX(\bfA) \big)^t \frac{\|\bfx^0 - \bfx^*\|^2}{\zeta},
					\end{align}
					where $\kappa_\clX(\bfA)$ is as defined before.  \hfill $\lozenge$
					\end{remark}
										In the following, we will design several variants of KA depending on the specific beamforming scenario and apply Theorem \ref{main_thm} to analyze the convergence speed of the resulting algorithm. We will also explain how the convergence speed depends on the system parameters such as the number of antennas $M$, the number of served users $K$, and the spatial correlation of the channel vectors of the users.
					

\section{Kaczmarz Algorithm for different Beamforming Scenarios}\label{sec:scenarios}

In this section, we consider a multiuser massive MIMO system with a BS having $M$ antennas and $K$ single-antenna users. We remind from Section \ref{sec:definition} that we denote the perfect channel state of the users by an $M\times K$ matrix $\bfH$ and the noisy and possibly quantized estimate of $\bfH$  available at the BS by the $M\times K$ matrix $\bfQ$. 

\subsection{MMSE/ZF detection in the UL}\label{sec:mmse_ul}

We first consider a UL scenario. From what we have previously discussed in Section \ref{sec:up}, the ZFD in \eqref{eq:ulZF} and MMSED in \eqref{eq:ulMMSE} can be derived, by setting $\xi=0$ and $\xi=\frac{1}{\snr}$ respectively, from the following general expression:
\begin{align}\label{MMSED_eq}
\widehat{\bfs}=(\bfQ^\herm \bfQ + \xi \bfI_K)^{-1} \bfQ^\herm \bfy.
\end{align}
Let us now focus on the general case where $\xi\ne0$.

 We make the following observation, that we prove here for the sake of completeness. 
\begin{proposition}\label{prop:mmse_bf_ul}
	Let $\bfQ$, $\bfy$ and $\widehat{\bfs}$ be as in \eqref{MMSED_eq}. Then, the signal estimate $\widehat{\bfs}$ is the optimal solution of the optimization problem  $\argmin_{\bfw \in \bC^K} \|\bfQ \bfw - \bfy\|^2 + \xi \|\bfw\|^2$. \hfill $\square$
\end{proposition}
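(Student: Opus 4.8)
The plan is to treat the functional $f(\bfw)=\|\bfQ\bfw-\bfy\|^2+\xi\|\bfw\|^2$ as a strictly convex quadratic in the complex vector $\bfw\in\bC^K$, whose unique global minimizer is characterized by a single first-order (normal-equation) condition. I would show that this condition is exactly $(\bfQ^\herm\bfQ+\xi\bfI_K)\bfw=\bfQ^\herm\bfy$, so that its solution coincides with the $\widehat{\bfs}$ of \eqref{MMSED_eq}.

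First I would expand $f(\bfw)=(\bfQ\bfw-\bfy)^\herm(\bfQ\bfw-\bfy)+\xi\,\bfw^\herm\bfw$ and collect terms into a quadratic form with Hermitian matrix $\bfM:=\bfQ^\herm\bfQ+\xi\bfI_K$, a linear part $-2\,\Re\{\bfw^\herm\bfQ^\herm\bfy\}$, and the constant $\bfy^\herm\bfy$. Since $\bfQ^\herm\bfQ$ is positive semidefinite and $\xi>0$, the matrix $\bfM$ is Hermitian positive definite -- the very condition that makes $\widehat{\bfs}$ well defined in \eqref{MMSED_eq}. Positive definiteness of $\bfM$ renders $f$ strictly convex, so any stationary point is automatically the unique global minimizer.

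To locate that point I would either differentiate $f$ with respect to $\bfw^\herm$ in the Wirtinger sense (treating $\bfw$ and $\bfw^\herm$ as independent) and set the derivative to zero, or -- to sidestep complex calculus entirely -- complete the square about the candidate $\widehat{\bfs}=\bfM^{-1}\bfQ^\herm\bfy$. The latter route uses $\bfM\widehat{\bfs}=\bfQ^\herm\bfy$ and the Hermitian symmetry of $\bfM$ to rewrite
\begin{align}
f(\bfw)=(\bfw-\widehat{\bfs})^\herm\bfM(\bfw-\widehat{\bfs})+\big(\bfy^\herm\bfy-\bfy^\herm\bfQ\bfM^{-1}\bfQ^\herm\bfy\big),
\end{align}
where the bracketed term is independent of $\bfw$ and, since $\bfM\succ0$, the remaining quadratic is nonnegative and vanishes if and only if $\bfw=\widehat{\bfs}$. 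This delivers optimality and uniqueness in one stroke.

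I do not expect a genuine obstacle, since this is the standard Tikhonov/ridge-regression identity; the only points meriting care are the complex-valued setting -- handled cleanly by the completion-of-square identity above, which holds verbatim over $\bC$ precisely because $\bfM$ is Hermitian -- and the verification that $\bfM\succ0$, which reduces to $\xi>0$ together with the positive semidefiniteness of $\bfQ^\herm\bfQ$.
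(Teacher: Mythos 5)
Your proof is correct, and it takes a somewhat different route from the paper's. The paper's own proof is a one-line first-order argument: it differentiates the cost, obtaining $2\bfQ^\herm(\bfQ\bfw-\bfy)+2\xi\bfw$, sets this to zero, and reads off $\bfw^*=(\bfQ^\herm\bfQ+\xi\bfI_K)^{-1}\bfQ^\herm\bfy$. It does not comment on why the stationary point is the (unique, global) minimizer, nor on the subtleties of differentiating a real-valued function of a complex vector; both are implicitly swept under convexity and Wirtinger-calculus conventions. Your completion-of-square argument buys exactly what the paper leaves tacit: the identity $f(\bfw)=(\bfw-\widehat{\bfs})^\herm\bfM(\bfw-\widehat{\bfs})+\bigl(\bfy^\herm\bfy-\bfy^\herm\bfQ\bfM^{-1}\bfQ^\herm\bfy\bigr)$ with $\bfM=\bfQ^\herm\bfQ+\xi\bfI_K\succ 0$ is purely algebraic, holds verbatim over $\bC$, and delivers existence, global optimality, and uniqueness in one stroke, with no complex differentiation needed. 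The paper's approach is shorter and is the natural precursor to the normal-equations viewpoint it exploits in \eqref{s_eq_mmse}, while yours is the more self-contained and rigorous of the two. One small remark: you invoke $\xi>0$ for positive definiteness, which matches the paper's stated context ($\xi\neq 0$) immediately preceding the proposition; note that the $\xi=0$ (ZFD) case is also covered by your argument since the paper assumes $\bfQ$ has full column rank in the over-determined setting, so $\bfQ^\herm\bfQ\succ 0$ there as well.
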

\begin{proof}
	Taking the derivative of the cost function gives $2\bfQ^\herm (\bfQ \bfw - \bfy) + 2\xi \bfw$. Setting this equal to zero yields the optimal solution $\bfw^*=(\bfQ^\herm \bfQ + \xi \bfI_K)^{-1} \bfQ^\herm \bfy$ which coincides with $\widehat{\bfs}$. 
\end{proof}

Note that we can write the $l_2$ regularized least-squares cost function $\|\bfQ \bfw - \bfy\|^2 + \xi \|\bfw\|^2$ in Proposition \ref{prop:mmse_bf_ul}  as $\|\bfB \bfw - \bfy_0\|^2$, where $\bfB=[\bfQ; \sqrt{\xi} \bfI_K]$ is an $(M+K)\times K$ matrix and where $\bfy_0=[\bfy; 0]$ is the $(M+K)$-dim vector obtained by appending a $K$-dim zero vector to the noisy received signal $\bfy$. The signal estimate $\widehat{\bfs}$ in 
\eqref{MMSED_eq}
 can also be written as the least-squares solution of the newly defined cost function and is given by
\begin{align}\label{s_eq_mmse}
\widehat{\bfs}=(\bfB^\herm \bfB)^{-1} \bfB^\herm \bfy_0= (\bfQ^\herm \bfQ + \xi \bfI_K)^{-1} \bfQ^\herm \bfy.
\end{align}

Let us consider the SLE $\bfB \bfw =\bfy_0$. Note that this SLE is OD and is always inconsistent when $\xi>0$. 
Furthermore, it is generally inconsistent even when $\xi=0$ due to the presence of noise in the received measurements $\bfy$. 
Thus, a direct application of the randomized KA will result in a residual error as we discussed in Section \ref{sec:KA_intro} and as we will illustrate via numerical simulations in Section \ref{sec:results:residual}.  To avoid this, we solve this SLE in two steps. We first define a new vector 
\begin{align}\label{y0_MMSED}
\widehat{\bfy}_0= \bfB \widehat{\bfs}\stackrel{(i)}{=}\bfB (\bfB^\herm \bfB)^{-1}\bfB^\herm \bfy_0=\bfB (\bfB^\herm \bfB)^{-1}\bfQ^\herm \bfy,
\end{align}
where in $(i)$ we used \eqref{s_eq_mmse}.
Note that $\widehat{\bfy}_0$ lies in the subspace $\clX$ of $\bC^{M+K}$ spanned the columns of $\bfB$. Moreover, from
\begin{align}\label{mmse_eq_ul_dumm}
\bfB^\herm \widehat{\bfy}_0=(\bfB^\herm \bfB) (\bfB^\herm \bfB)^{-1}\bfQ^\herm \bfy=\bfQ^\herm \bfy,
\end{align}
it is seen that $\widehat{\bfy}_0$ satisfies the UD SLE $\bfB^\herm \bfz=\bfb$ for the unknown $\bfz \in \bC^{M+K}$ and the set of known coefficients $\bfb=\bfQ^\herm \bfy$. We  first apply the randomized KA to this SLE to recover $\widehat{\bfy}_0$. We then have the following result.
\begin{proposition}\label{prop:MMSED_ul_2}
	Let $\bfz^t$ be the estimate obtained at  iteration $t$ of KA applied to the UD SLE $\bfB^\herm \bfz=\bfb$ where $\bfb=\bfQ^\herm \bfy$. Assume that KA starts from the zero initialization $\bfz^0=0$ and let $\widehat{\bfy}_0$ be as in \eqref{y0_MMSED}. Then
	\begin{align}
	\bE\big [\|\bfz^t - \widehat{\bfy}_0\|^2 \big ] \leq (1-\kappa_\clX(\bfB^\herm))^t \|\widehat{\bfy}_0\|^2,
	\end{align}
	where $\kappa_\clX(\bfB^\herm)$ is the minimum gain of the matrix $\bfB^\herm$ over the subspace $\clX$ generated by the columns of $\bfB$. \hfill $\square$
\end{proposition}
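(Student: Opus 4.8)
The plan is to recognize that Proposition \ref{prop:MMSED_ul_2} is nothing more than a direct specialization of Theorem \ref{main_thm} to the matrix $\bfA = \bfB^\herm$, so the entire task reduces to checking that the hypotheses of that theorem hold for the underdetermined system $\bfB^\herm \bfz = \bfb$ with $\bfb = \bfQ^\herm \bfy$. In the notation of Theorem \ref{main_thm} I set $\bfA = \bfB^\herm$, a $K \times (M+K)$ matrix, so that the relevant subspace is $\clX = $ column span of $(\bfB^\herm)^\herm = \bfB$, which is exactly the subspace named in the proposition.

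First I would confirm that $\widehat{\bfy}_0$ plays the role of the solution $\bfx^*$ in Theorem \ref{main_thm}. Two facts are needed, and both have already been established immediately before the statement: (i) $\widehat{\bfy}_0 \in \clX$, since by definition $\widehat{\bfy}_0 = \bfB \widehat{\bfs}$ is a linear combination of the columns of $\bfB$; and (ii) $\widehat{\bfy}_0$ solves the system, since \eqref{mmse_eq_ul_dumm} gives $\bfB^\herm \widehat{\bfy}_0 = \bfQ^\herm \bfy = \bfb$. Hence $\bfB^\herm \bfz = \bfb$ is consistent and admits a solution lying in $\clX$, which is precisely the structural requirement of Theorem \ref{main_thm}. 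This is exactly the property that the two-step reformulation of Section \ref{sec:mmse_ul} was engineered to deliver: the inconsistent OD system $\bfB \bfw = \bfy_0$ is replaced by the consistent UD system $\bfB^\herm \bfz = \bfb$, whose solution set contains the target $\widehat{\bfy}_0$.

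Next I would check the initialization hypothesis: Theorem \ref{main_thm} requires $\bfx^0 \in \clX$, and here $\bfz^0 = 0$, which lies trivially in $\clX$. With every hypothesis verified, Theorem \ref{main_thm} applies verbatim with $\bfA = \bfB^\herm$, $\bfx^* = \widehat{\bfy}_0$, and $\bfx^0 = \bfz^0 = 0$, giving
\[
\bE\big[\|\bfz^t - \widehat{\bfy}_0\|^2\big] \leq \big(1 - \kappa_\clX(\bfB^\herm)\big)^t \|\bfz^0 - \widehat{\bfy}_0\|^2.
\]
Substituting $\bfz^0 = 0$ yields $\|\bfz^0 - \widehat{\bfy}_0\|^2 = \|\widehat{\bfy}_0\|^2$, which is the claimed bound.

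There is essentially no difficult step here; the only point requiring care is the consistency-and-containment check, namely verifying that the candidate $\widehat{\bfy}_0$ simultaneously solves $\bfB^\herm \bfz = \bfb$ and sits inside $\clX$, so that the contraction factor $\kappa_\clX(\bfB^\herm)$ of Definition \ref{min_gain} genuinely governs the error. The underlying reason this succeeds is that each KA update adds a multiple of a conjugated row of $\bfB^\herm$ (i.e. a column of $\bfB$, hence an element of $\clX$) to the current iterate, so starting from $\bfz^0 = 0 \in \clX$ keeps every $\bfz^t$ in $\clX$ and keeps the error $\bfz^t - \widehat{\bfy}_0$ in $\clX$, where the minimum-gain inequality is valid. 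Since this invariance is already internalized in the proof of Theorem \ref{main_thm}, I would not re-derive it and would simply invoke the theorem.
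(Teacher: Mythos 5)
Your proposal is correct and follows essentially the same route as the paper: the paper's own proof likewise observes that, starting from $\bfz^0=0$, every iterate stays in the column span $\clX$ of $\bfB$, and then invokes Theorem~\ref{main_thm} with $\bfA=\bfB^\herm$ and $\bfx^*=\widehat{\bfy}_0$. Your explicit verification that $\widehat{\bfy}_0\in\clX$ and $\bfB^\herm\widehat{\bfy}_0=\bfQ^\herm\bfy$ (via \eqref{y0_MMSED} and \eqref{mmse_eq_ul_dumm}) merely spells out what the paper relies on from the preceding discussion, so nothing is missing.
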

\begin{proof}
	From the update equation for KA and applying a simple induction, it is seen that starting from $\bfz^0=0$, all the estimates $\bfz^t$ produced by KA lie in the subspace $\clX$ produced by the columns of $\bfB$. Thus, the result simply follows by applying Theorem \ref{main_thm}.
\end{proof}

Once $\widehat{\bfy}_0$ is recovered, we solve the OD SLE $\bfB \widehat{\bfs}=\widehat{\bfy}_0$ in \eqref{y0_MMSED} to find the estimate $\widehat{\bfs}$. In contrast with the initial SLE $\bfB \widehat{\bfs}=\bfy$, which was inconsistent, the OD SLE in \eqref{y0_MMSED} is always consistent since $\widehat{\bfy}_0$ always lies in the subspace spanned by the columns of $\bfB$. Note that since $\bfB \widehat{\bfs}=\widehat{\bfy}_0$ is consistent and $\bfB=[\bfQ; \sqrt{\xi}\bfI_K]$ has $\sqrt{\xi}\bfI_K$ as its submatrix, $\widehat{\bfs}$ is simply given by the last $K$ components of $\widehat{\bfy}_0$ divided by $\sqrt{\xi}$. As a result, there is no need to solve the second SLE $\bfB \widehat{\bfs}=\widehat{\bfy}_0$ per se. 
Moreover, denoting by $\bfz^t$ the estimate of KA for the first SLE $\bfB^\herm \bfz=\bfb$ with $\bfb=\bfQ^\herm \bfy$, we can obtain an estimate $\widehat{\bfs}^t$ at each iteration $t$ of KA by dividing the last $K$ components of $\bfz^t$ by $\sqrt{\xi}$. By introducing $\bfu^t\in \bC^M$ and $\bfv^t\in \bC^K$ such that $\bfz^t=[\bfu^t; \sqrt{\xi} \bfv^t]$, we can summarize KA for $\bfB^\herm \widehat{\bfy}_0=\bfQ^\herm \bfy$ as in Algorithm \ref{tab:Kac_MMSED}. A direct inspection shows that, due to the submatrix $\sqrt{\xi} \bfI_K$ in $\bfB=[\bfQ; \sqrt{\xi} \bfI_K]$, at each iteration of KA only a single  randomly selected component of the estimate $\bfv^t$ is updated, where $\bfv^t$ converges to the least-squares estimate of the symbols $\widehat{\bfs}$ corresponding to the MMSED. In particular, due to the continuity of $\widehat{\bfs}$ with respect to $\xi$ as $\xi \to 0$, one can obtain the ZFD estimate $\widehat{\bfs}$ by setting $\xi=0$  in Algorithm \ref{tab:Kac_MMSED}.

\begin{algorithm}[t]
	\caption{ Kaczmarz Algorithm for the UL} 
	\label{tab:Kac_MMSED} 
	\begin{algorithmic}[1]
		\State {\bf Input:} {Channel state estimate} $\bfQ\in \bC^{M \times K}$, received noisy UL signal  $\bfy \in \bC^M$, and MMSED parameter $\xi\geq 0$. \Comment{$\xi=0$ corresponds to the ZFD}.
		\State Compute $\bfb=\bfQ^\herm \bfy \in \bC^K$.
		\State Define $\bfu^t \in \bC^M$ and $\bfv^t \in \bC^K$ with $\bfu^0=0$, $\bfv^0=0$.  
		\For{$t=0,1,\dots, T-1$}
		\State Pick a  row $r(t)$ of $\bfQ^\herm$ denoted by $\bfq_{r(t)}^\herm$ with a probability proportional to $\|\bfq_{r(t)} \|^2+ \xi$ given by $\frac{\|\bfq_{r(t)} \|^2+ \xi}{\|\bfQ\|_\sfF^2 + K \xi}$.
		\State Compute the residual $\gamma^t:=\frac{ b_{r(t)}- \inp{\bfq_{r(t)}}{\bfu^t} - \xi\, v^{t}_{r(t)}}{\|  \bfq_{r(t)}\|^2 + \xi}$.
		\State Update $\bfu^{t+1}=\bfu^t + \gamma^t \bfq_{r(t)}$.
		\State Update $v_{r(t)}^{t+1}=v^t_{r(t)}+ \gamma^t$, and $v_{j}^{t+1}=v^t_{j}$ for $j \neq r(t)$.
		\EndFor
				\State {\bf Output:}  Set $\bfw=\bfv^{T-1}$.
		\State {\bf Output:} Set the signal estimate to $\widehat{\bfs}=\bfw$.
	\end{algorithmic}
\end{algorithm}

To analyze the convergence speed of KA, from Proposition \ref{prop:MMSED_ul_2}, we need to compute the gain of the matrix $\bfB^\herm$ along the subspace $\clX$ spanned by the conjugate of the rows of $\bfB^\herm$, that is, the column span of $\bfB$. 
Since every vector in $\clX$ can be written as $\bfB \bfw$ for some $\bfw\in \bC^K$, a direct calculation shows that $\kappa_\clX(\bfB^\herm)$ is given by: 
\begin{align}
\kappa_\clX(\bfB^\herm)&=\min_{\bfx \in \clX, \bfx\neq 0} \frac{\|\bfB^\herm \bfx\|^2}{\|\bfB^\herm\|_\sfF^2 \|\bfx\|^2}=\min_{\bfw \in \bC^K, \bfw\neq 0} \frac{\|\bfB^\herm \bfB \bfw\|^2}{\|\bfB^\herm\|_\sfF^2 \|\bfB \bfw\|^2}\nonumber\\
&=\frac{\lambda_{\min} (\bfB^\herm \bfB)}{\|\bfB\|_\sfF^2}=\frac{\lambda_{\min} (\bfQ^\herm \bfQ) + \xi}{ \|\bfQ\|_\sfF^2 + K\xi}.\label{MMSED_DL}
\end{align}
Note that $\frac{\lambda_{\min} (\bfQ^\herm \bfQ) }{ \|\bfQ\|_\sfF^2} \leq \frac{1}{K}$ and it is not difficult to check that for any $\xi\ge0$, we have
\begin{align}
 \frac{\lambda_{\min} (\bfQ^\herm \bfQ) }{ \|\bfQ\|_\sfF^2} \leq \frac{\lambda_{\min} (\bfQ^\herm \bfQ) + \xi}{ \|\bfQ\|_\sfF^2 + K\xi}  \leq \frac{1}{K}.
\end{align}
This implies that the algorithm converges faster for larger values of $\xi\ge0$ (i.e., smaller UL SNR) and the slowest convergence corresponds to the ZFD with $\xi=0$.
 When the channel vectors of the users have white complex Gaussian components (no spatial correlation),  from well-known results in random matrix theory \cite{random_mtx_verdu}, it results that 
\begin{align}
\kappa_\clX(\bfB^\herm)&\geq  \frac{\lambda_{\min} (\bfQ^\herm \bfQ)}{\|\bfQ\|_\sfF^2}= \frac{(\sqrt{M}-\sqrt{K})^2}{MK}
=\frac{(1-\sqrt{\rho})^2}{K}.
\end{align}
Therefore, from Proposition \ref{prop:MMSED_ul_2} it results that $O(\frac{K}{(1-\sqrt{\rho})^2})$ iterations are sufficient to reasonably estimate the symbol sequence $\widehat{\bfs}$ for  the MMSED for all ranges of $\xi> 0$ and for the ZFD for $\xi=0$. In particular, it is seen that the required estimation time grows linearly with the number of users $K$. Since each step of KA requires multiplying two $M$-dim vectors, thus, requiring $M$ operations, the total computational complexity scales as $O(\frac{KM}{(1-\sqrt{\rho})^2})$. To this we should also add $O(KM)$ operations needed for computing the sub-optimal row-sampling probability distribution and $O(KM)$ operations needed for calculating $\bfQ^\herm \bfy$ in Algorithm \ref{tab:Kac_MMSED}.

\subsection{RZF/ZF Precoding in the DL}\label{sec:MMBF_dl}

In this section, we consider  RZF precoding in the DL. Suppose $\bfs=(s_1, \dots, s_K)^\transp$ is the symbol sequence BS intends to send to the users in the DL.  In the RZFBF in the DL, the BS transmits the $M$-dim vector $\bfy:=\bfQ(\bfQ^\herm \bfQ + \xi \bfI_K)^{-1} \bfs$ via its $M$ antennas. 
We  observe that $\bfy$ lies in the subspace spanned by the columns of $\bfQ$. We  design a new KA that finds the $K$-dim signal $\bfw=(\bfQ^\herm \bfQ + \xi \bfI_K)^{-1}\bfs$, from which we can obtain $\bfy=\bfQ \bfw$ via a matrix-vector multiplication. We first prove the following result.
\begin{proposition}\label{prop:mmbf_dl}
	Let $\bfQ$, $\bfs$ and $\bfw=(\bfQ^\herm \bfQ + \xi \bfI_K)^{-1}\bfs$ be as introduced before. Then, $\bfw$ is the optimal solution of the following optimization problem 
	$\bfw=\argmin_{\bfx \in \bC^K}  \|\bfQ \bfx\|^2+ \xi \|\bfx - \bfs_\xi\|^2 $, where $\bfs_\xi=\frac{\bfs}{\xi}$. \hfill $\square$
\end{proposition}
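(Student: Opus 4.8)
The plan is to proceed exactly as in the proof of Proposition~\ref{prop:mmse_bf_ul}, since the objective $\|\bfQ\bfx\|^2 + \xi\|\bfx - \bfs_\xi\|^2$ is again an unconstrained, smooth, strictly convex quadratic in $\bfx$ (strict convexity holding because $\xi>0$ makes the Hessian $\bfQ^\herm\bfQ + \xi\bfI_K$ positive definite). For such a function the unique global minimizer is characterized by the vanishing of the gradient, so it suffices to compute the (Wirtinger) derivative of the cost, set it to zero, and solve the resulting linear system.

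Carrying this out, I would differentiate the two terms separately: the derivative of $\|\bfQ\bfx\|^2 = \bfx^\herm\bfQ^\herm\bfQ\bfx$ is $2\bfQ^\herm\bfQ\bfx$, and the derivative of $\xi\|\bfx - \bfs_\xi\|^2$ is $2\xi(\bfx - \bfs_\xi)$. Setting the sum to zero gives $\bfQ^\herm\bfQ\bfx + \xi\bfx = \xi\bfs_\xi$, and substituting $\bfs_\xi = \bfs/\xi$ collapses the right-hand side to $\bfs$, leaving the normal equations $(\bfQ^\herm\bfQ + \xi\bfI_K)\bfx = \bfs$. Inverting the coefficient matrix (which is invertible since $\xi>0$) yields the unique stationary point $\bfx^* = (\bfQ^\herm\bfQ + \xi\bfI_K)^{-1}\bfs$, which is exactly $\bfw$.

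There is essentially no hard step here: the only points requiring minor care are the bookkeeping of the complex (Wirtinger) gradient and the observation that the argument tacitly requires $\xi>0$, both for strict convexity and for the substitution $\bfs_\xi = \bfs/\xi$ to be well-defined. The degenerate case $\xi=0$ corresponds to zero-forcing and, as noted in the surrounding text, is recovered by continuity as $\xi \to 0$ rather than directly from this formulation.
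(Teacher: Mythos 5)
Your proposal is correct and matches the paper's own proof of Proposition~\ref{prop:mmbf_dl}, which likewise differentiates the cost, sets the gradient to zero to obtain $\bfQ^\herm\bfQ\bfx + \xi\bfx - \bfs = 0$, and solves for $\bfx^* = (\bfQ^\herm\bfQ + \xi\bfI_K)^{-1}\bfs$. Your added remarks on strict convexity for $\xi>0$ and on recovering the $\xi=0$ case by continuity are sound refinements of the same argument, not a different route.
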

\begin{proof}
	Taking the derivative with respect to $\bfx$ and setting it equal to $0$ yields $\bfQ^\herm \bfQ \bfx + \xi \bfx - \bfs=0$. This gives the optimal solution as $\bfx^*=(\bfQ^\herm \bfQ + \xi \bfI_K)^{-1} \bfs$, which coincides with $\bfw=(\bfQ^\herm \bfQ + \xi \bfI_K)^{-1}\bfs$.
\end{proof}

\begin{algorithm}[t]
	\caption{ Kaczmarz Algorithm for the DL} 
	\label{tab:Kac_MMBF} 
	\begin{algorithmic}[1]
		\State {\bf Input:} Channel state estimate $\bfQ\in \bC^{M \times K}$, users symbols $\bfs \in \bC^K$, and RZFBF parameter $\xi\geq 0$. \Comment{$\xi=0$ corresponds to the ZFBF}.
		\State Define $\bfu^t \in \bC^M$ and $\bfv^t \in \bC^K$ with $\bfu^0=0$, $\bfv^0=0$.  
		\For{$t=0,1,\dots, T-1$}
		\State Pick a  row $r(t)$ of $\bfQ^\herm$ denoted by $\bfq_{r(t)}^\herm$ with a probability proportional to $\|\bfq_{r(t)} \|^2+ \xi$ given by $\frac{\|\bfq_{r(t)} \|^2+ \xi}{\|\bfQ\|_\sfF^2 + K \xi}$.
		\State Compute the residual $\gamma^t:=\frac{ s_{r(t)}- \inp{\bfq_{r(t)}}{\bfu^t} - \xi\, v^{t}_{r(t)}}{\|  \bfq_{r(t)}\|^2 + \xi}$.
		\State Update $\bfu^{t+1}=\bfu^t + \gamma^t \bfq_{r(t)}$.
		\State Update $v_{r(t)}^{t+1}=v^t_{r(t)}+ \gamma^t$, and $v_{j}^{t+1}=v^t_{j}$ for $j \neq r(t)$.
		\EndFor
		\State {\bf Output:}  Set $\bfw=\bfv^{T-1}$.
		\State {\bf Output:} Set the desired DL signal  to $\bfy=\bfQ \bfw$.
	\end{algorithmic}
\end{algorithm}

Proposition \ref{prop:mmbf_dl} implies that we can obtain $\bfw=(\bfQ^\herm \bfQ + \xi \bfI_K)^{-1}\bfs$ as the optimal solution of $\|\bfQ \bfx\|^2 + \xi \|\bfx - \bfs_\xi\|^2$. We first write this as the least-squares minimization $\|\bfB \bfx-\bfb\|^2$ where $\bfB=[\bfQ; \sqrt{\xi}\bfI_K]$ is an $(M+K)\times K$ matrix and where $\bfb=[0;\sqrt{\xi}\bfs_\xi]$ is an $(M+K)$-dim vector with $\sqrt{\xi}\bfs_\xi=\frac{\bfs}{\sqrt{\xi}}$ in its last positions. 
The SLE $\bfB \bfw=\bfb$ is  OD  and should be solved for the vector $\bfw \in \bC^K$, from which we obtain the desired signal $\bfy=\bfQ \bfw$ to be transmitted in the DL. 
This SLE is inconsistent unless $\bfs=0$. Thus, we follow the same procedure as in the design of MMSED for the UL. 
We first define $\bfz=\bfB \bfw$ and solve the UD SLE 
\begin{align}\label{mmbf_z_dumm}
\bfB^\herm \bfz= \bfB^\herm \bfb=\sqrt{\xi} \frac{\bfs}{\sqrt{\xi}}=\bfs,
\end{align}
 for $\bfz$ and then solve the OD but consistent SLE $\bfB \bfw=\widehat{\bfz}$, where $\widehat{\bfz}$ is the estimate of $\bfz$ obtained from KA applied to \eqref{mmbf_z_dumm}.
Note that, as in the MMSED, due to the presence of $\sqrt{\xi} \bfI_K$ submatrix in $\bfB=[\bfQ; \sqrt{\xi} \bfI_K]$, the solution of the latter SLE is simply obtained by dividing the last $K$ components of $\widehat{\bfz}$ by $\sqrt{\xi}$.
Denoting by $\bfz^t=[\bfu^t; \sqrt{\xi} \bfv^t]$ the estimate at iteration $t$ of KA, we can summarize this in Algorithm \ref{tab:Kac_MMBF}.

In terms of speed of convergence,  a direct calculation shows that the minimum gain of the matrix $\bfB^\herm$ over the subspace $\clX$ spanned by the columns of $\bfB$ is given by:
\begin{align}
\kappa_\clX (\bfB^\herm)&=\frac{ \lambda_{\min} (\bfB^\herm \bfB)}{\|\bfB\|_\sfF^2}=\frac{ \lambda_{\min} (\bfQ^\herm \bfQ) + \xi}{\|\bfQ\|_\sfF^2 + K \xi}.
\end{align} 
In particular,  when the channel vectors of the users have white complex Gaussian components (no spatial correlation), a direct calculation as in the case of MMSED shows that  the total computational complexity of KA scales as $O(\frac{KM}{(1-\sqrt{\rho})^2})$. To this we should also add $O(KM)$ operations needed for computing the sampling probability distribution and $O(KM)$ operation needed for calculating $\bfQ\bfw$ in Algorithm \ref{tab:Kac_MMBF}.

A direct inspection in Algorithm \ref{tab:Kac_MMSED} and \ref{tab:Kac_MMBF} also reveals that there is a symmetry between the MMSED for the UL and RZFBF for the DL: In the former, first the noisy received signal $\bfy$ is multiplied by $\bfQ^\herm$ and then the resulting vector $\bfb=\bfQ^\herm \bfy$ enters KA, which yields the estimate of the users symbols  $\widehat{\bfs}$, whereas in the latter, first the users symbols $\bfs$ to be transmitted in the DL enters KA and then the output of KA is multiplied by $\bfQ$ to yield the DL signal. Moreover, the speed of convergence is the same for both cases. This highly resembles the well-known 
UL-DL duality in the underlying vector Gaussian broadcast/multiple-access channel \cite{vishwanath2003duality}.


\section{Computation of the Detector/Precoder Matrix}\label{sec:matrix_design}
Up to now, we have used our proposed KA to compute an estimate of the users symbols from the received signal in the UL or to compute a suitable precoding vector in the DL. More specifically, we did not compute any UL/DL detection/precoding matrix from the available estimate of the channel state directly.  In this section, our goal is to show that our proposed KA can be generalized to compute the detection/precoding matrix as well. This results in a saving in computational complexity  when the channel coherence block 
is quite large such that the detection/precoding matrix corresponding to each coherence block is computed only once and  is then used for all the time-frequency signal dimensions inside the same coherence block. In the following, we focus on a specific coherence block and denote the estimate of the channel state of the users over  this coherence block by the $M\times K$ matrix $\bfQ$ as before. We consider the following procedure to compute the corresponding detection/precoding matrix.

In the DL, we use Algorithm \ref{tab:Kac_MMBF} with the input $\bfQ$ and run $K$ KAs in parallel where the input to the $i$-th KA is $\bfs_i=\bfe_i \in \bC^K$, where $\bfe_i$ denotes the $i$-th canonical basis, which has $1$ as its $i$-th component and is $0$ elsewhere. Note that the $K$ parallel KAs may share their randomness, i.e., they may use the same random row index  at each iteration,  or they may use their own independent randomness.   Let $\bfw_i \in \bC^K$ denote the output of the $i$-th KA after a suitable convergence.
Then, from the last step of Algorithm \ref{tab:Kac_MMBF}, namely, post-multiplication by $\bfQ$, the precoding vector $\bfy_i \in \bC^M$ in the DL produced by the $i$-th KA (corresponding to the symbol vector $\bfs_i=\bfe_i$) would be $\bfy_i=\bfQ \bfw_i$. However, due to the convergence of KA, $\bfy_i$ should be (approximately) equal  to the output of the RZF in the DL with the input $\bfs_i=\bfe_i$, i.e.,  
\begin{align}\label{dumm_bf_mat1}
\bfy_i=\bfQ(\bfQ^\herm \bfQ + \xi \bfI_K)^{-1} \bfs_i=\bfQ(\bfQ^\herm \bfQ + \xi \bfI_K)^{-1}\bfe_i,
\end{align}
which corresponds to the $i$-th column of the precoding matrix $\bfQ(\bfQ^\herm \bfQ + \xi \bfI_K)^{-1}$. 
Since $\bfQ$ is assumed to have a full column rank, from \eqref{dumm_bf_mat1} and the fact that $\bfy_i=\bfQ \bfw_i$, it immediately results that $\bfw_i$, after a suitable convergence, should correspond to the $i$-th column of $(\bfQ^\herm \bfQ + \xi \bfI_K)^{-1}$. 
Therefore, $(\bfQ^\herm \bfQ + \xi \bfI_K)^{-1}$ can be well approximated by the $K \times K$ matrix $\bfW=[\bfw_1, \dots, \bfw_K] \in \bC^{K \times K}$ consisting of the output produced by $K$ parallel KAs with corresponding inputs  $\{\bfs_i=\bfe_i: i\in [K]\}$. Also, the RZF matrix for the DL $\bfQ(\bfQ^\herm \bfQ + \xi \bfI_K)^{-1}$ can be approximated as $\bfG=\bfQ \bfW$. 
%
%
%
%
 Note that if the $K$ KAs can be run in parallel, computing the precoding matrix $\bfG$ will require $O(MK)$ operations including the $O(MK)$ computational complexity of finding the row selection probability (which is the same for all KAs and is shared among them).  Also, in terms of computational complexity, it is beneficial to keep the precoding matrix $\bfG$ in the factorized form $\bfG=\bfQ \bfW$  and compute the precoding vector as follows 
\begin{align}
\bfy=\bfG \bfs = \bfQ \bfW \bfs =(\bfQ (\bfW \bfs)),
\end{align}
where the parentheses in the last expression show the order of multiplication. Otherwise, we will need an additional $O(M K^2)$ operations to compute $\bfG$ directly, which increases the complexity by a factor of $O(K)$. 

Now consider the UL scenario. Since the detection matrix for the UL, $(\bfQ^\herm \bfQ + \xi \bfI_K)^{-1} \bfQ^\herm$ is the conjugate of the precoding matrix for the DL, we can obtain an approximation of the detection matrix as $\bfG^\herm=\bfW^\herm \bfQ^\herm$, where $\bfW$  denotes the $K \times K$ matrix produced at the output of the $K$ parallel KA in the DL scenario explained before\footnote{Note that, here, we decided to use Algorithm \ref{tab:Kac_MMBF} (initially designed for the DL) to compute beamforming matrices for both the UL and the DL. Instead, we could  apply Algorithm \ref{tab:Kac_MMSED} to $K$ parallel KAs with corresponding inputs $\{\bfb_i=\bfe_i: i \in [K]\}$ (see the definition of the variable $\bfb$ in Algorithm \ref{tab:Kac_MMSED}) to compute the beamforming matrix for the UL.}. 
%
%
%
Analogous to the DL, it is beneficial in terms of complexity to keep $\bfG^\herm$ in the factorized from and compute the estimate of users symbols from the received signal $\bfy \in \bC^M$ in the UL by
\begin{align}
\widehat{\bfs}=\bfG^\herm \bfy= \bfW^\herm \bfQ^\herm \bfy=(\bfW^\herm(\bfQ^\herm \bfy)),
\end{align}
where again the parentheses in the last expression show the order of multiplication. Recall that, assuming that  $K$ KAs  can be run in parallel, computing the detection matrix $\bfG^\herm$ (of course in its factorized form $\bfG^\herm=\bfW^\herm \bfQ^\herm$) requires $O(MK)$ operations. 
\section{Achievable Ergodic Rate}\label{sec:erg rate}

In this section, we derive lower and upper bounds on the achievable ergodic rate of our proposed KA.
We first prove an important (although an easy-to-see) property of KA.

\begin{proposition}\label{lin_prop}
	Consider an arbitrary SLE $\bfA \bfx =\bfb$ where $\bfA$ is the $m \times n$ matrix of equations, where $\bfx \in \bC^n$ is the set of unknowns, and where $\bfb \in \bC^m$ is the set of known coefficients. Let $\bfx^t$ be the estimate of  KA at iteration $t$ starting from the zero initialization $\bfx^0=0$. Then, $\bfx^t=\scrG^t(\bfA) \bfb$ where $\scrG^t(\bfA) \in \bC^{n\times m}$ is a linear operator that depends on $\bfA$ and the internal randomization of KA until the iteration $t$ but not on $\bfb$. \hfill $\square$
\end{proposition}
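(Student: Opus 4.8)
The plan is to rewrite the Kaczmarz update as an affine recursion in $\bfx^t$ whose coefficient matrices depend only on $\bfA$ and on the realized random row index, and then to invoke the zero initialization to collapse this affine map into a genuinely linear one.

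First I would isolate the dependence on $\bfb$ in the update. Writing $\inp{\bfa_{r(t)}}{\bfx^t}=\bfa_{r(t)}^\herm \bfx^t$ and $b_{r(t)}=\bfe_{r(t)}^\transp \bfb$, where $\bfe_{r(t)}$ is the $r(t)$-th canonical basis vector, the update rule of Algorithm \ref{tab:Kac} rearranges to
\[
\bfx^{t+1}=\big(\bfI_n-\scrP_{r(t)}\big)\bfx^t+\frac{\bfa_{r(t)}\bfe_{r(t)}^\transp}{\|\bfa_{r(t)}\|^2}\,\bfb ,
\]
where $\scrP_{r(t)}=\bfa_{r(t)}\bfa_{r(t)}^\herm/\|\bfa_{r(t)}\|^2$ is exactly the rank-$1$ projection already used in the proof of Theorem \ref{main_thm}. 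Setting $\bfM_t:=\bfI_n-\scrP_{r(t)}$ and $\bfN_t:=\bfa_{r(t)}\bfe_{r(t)}^\transp/\|\bfa_{r(t)}\|^2$, both of which are functions of $\bfA$ and of the single drawn index $r(t)$ only, the recursion reads $\bfx^{t+1}=\bfM_t\bfx^t+\bfN_t\bfb$.

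Next I would run the induction on $t$. For the base case, $\bfx^0=0=\scrG^0(\bfA)\bfb$ holds with $\scrG^0(\bfA)$ the zero $n\times m$ matrix, which is independent of $\bfb$ and, importantly, carries no constant term. For the inductive step, substituting $\bfx^t=\scrG^t(\bfA)\bfb$ into the recursion gives $\bfx^{t+1}=\big(\bfM_t\scrG^t(\bfA)+\bfN_t\big)\bfb$, so that $\scrG^{t+1}(\bfA)=\bfM_t\scrG^t(\bfA)+\bfN_t$. Since $\bfM_t$, $\bfN_t$, and (by the inductive hypothesis) $\scrG^t(\bfA)$ depend only on $\bfA$ and on the row selections $r(0),\dots,r(t)$, the same is true of $\scrG^{t+1}(\bfA)$; the vector $\bfb$ appears solely as a right factor and never inside the coefficient matrices.

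I expect essentially no obstacle, as the paper itself flags the statement as easy-to-see. The only point demanding care is the precise role of the zero initialization: for a general $\bfx^0$ the map $\bfb\mapsto\bfx^t$ is merely \emph{affine}, and it is exactly the choice $\bfx^0=0$ that annihilates the additive constant and leaves a \emph{linear} operator $\scrG^t(\bfA)$. I would therefore be explicit in the base case that the constant part vanishes, so that linearity (rather than mere affinity) is what propagates through the induction.
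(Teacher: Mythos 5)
Your proof is correct and follows essentially the same route as the paper's: the same rewriting of the Kaczmarz update as $\bfx^{t+1}=(\bfI_n-\scrP_{r(t)})\bfx^t+\frac{\bfa_{r(t)}\bfe_{r(t)}^\herm}{\|\bfa_{r(t)}\|^2}\bfb$ followed by the same induction yielding $\scrG^{t+1}(\bfA)=(\bfI_n-\scrP_{r(t)})\scrG^t(\bfA)+\frac{\bfa_{r(t)}\bfe_{r(t)}^\herm}{\|\bfa_{r(t)}\|^2}$. Your explicit remark that a nonzero initialization would give only an affine map, with $\bfx^0=0$ annihilating the constant term, is a small but welcome clarification that the paper leaves implicit.
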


\begin{proof}
	We use induction on $t$. For $t=0$, we have $\bfx^0=0$ which is trivially a linear function (zero function) of $\bfb$. Let us assume that the induction hypothesis  is true for the $t$-th iteration, i.e., $\bfx^t=\scrG^t(\bfA) \bfb$, and prove the claim for $t+1$. Suppose $r(t) \in [m]$ is the index of the row selected at iteration $t$ and let us denote the conjugate of this row by $\bfa_{r(t)}$. Then, we have
	\begin{align}
	\bfx^{t+1}&=\bfx^t+\frac{ b_{r(t)}- \inp{\bfa_{r(t)}}{ \bfx^t} }{\| \bfa_{r(t)}\|^2}\bfa_{r(t)}
	\stackrel{(i)}{=} (\bfI_n - \frac{ \bfa_{r(t)} \bfa_{r(t)}^\herm }{\| \bfa_{r(t)}\|^2} ) \bfx^t+\frac{ \bfa_{r(t)} \bfe_{r(t)}^\herm}{\| \bfa_{r(t)}\|^2}\bfb\\
	&\stackrel{(ii)}{=}\left (  (\bfI_n- \frac{ \bfa_{r(t)} \bfa_{r(t)}^\herm }{\| \bfa_{r(t)}\|^2} ) \scrG^t(\bfA) +  \frac{ \bfa_{r(t)} \bfe_{r(t)}^\herm}{\| \bfa_{r(t)}\|^2} \right ) \bfb
	=:\scrG^{t+1}(\bfA) \bfb,\label{q_proof_2}
	\end{align}
	where in $(i)$ we defined $\bfe_{r(t)} \in \bC^m$ as the canonical vector that has $1$ at the index $r(t)$ and is zero elsewhere and replaced $b_{r(t)}=\bfe_{r(t)}^\herm \bfb$, and where in $(ii)$ we used the induction hypothesis that $\bfx^t=\scrG^t(\bfA) \bfb$ is a linear function of $\bfb$ with $\scrG^t(\bfA)$ not depending on $\bfb$. This implies that $\bfx^{t+1}=\scrG^{t+1}(\bfA) \bfb$, where it is seen that the linear operator $\scrG^{t+1}(\bfA) $ depends on $\bfA$ and on the internal randomization of KA until iteration $t+1$  but does not depend on $\bfb$. This proves the induction step and completes the proof.
\end{proof}

Using Proposition \ref{lin_prop}, we will derive lower and upper bounds on the achievable ergodic rate when the KA is run for a specific number of iterations $t$, which can  be generally much lower than $O(K)$ iterations  typically needed for its convergence (see, e.g., Section \ref{sec:scenarios}).
%
For simplicity, we focus on a UL scenario,  with a similar argument holding for a DL scenario. In the UL, the received signal at the BS is given by $\bfy=\bfH \bfs + \bfn$, where $\bfs \in \bC^K$ is the symbol sequence transmitted by the users, where $\bfH$ is the $M\times K$ matrix of the true channel state, and where $\bfn$ is the AWGN noise. 
From Proposition \ref{lin_prop}, it immediately results that,  assuming a zero initialization, the signal estimate $\widehat{\bfs}_t$ produced by KA at a specific iteration $t$ will be a random linear function of $\bfy$. More precisely, we have $\widehat{\bfs}_t =\scrG^t(\bfQ) \bfy$, where the $K\times M$ matrix $\scrG^t(\bfQ)$ depends on the noisy channel state $\bfQ$ and the internal randomization of KA but does not depend on  the received signal $\bfy$. 
 Letting  $\scrG^t(\bfQ)=[\bfg_1^t, \dots, \bfg_k^t]^\herm \in \bC^{K \times M}$ and considering the fact that the symbols estimate produced by KA is given by $\widehat{\bfs}_t =\scrG^t(\bfQ) \bfy$, we can interpret $\bfg_k^t \in \bC^M$, $k \in [K]$, as the detection vector corresponding to  the $k$-th user at $t$-th iteration. Note that the resulting detection matrix $\scrG^t(\bfQ)$ is generally far from MMSED matrix but converges to it when the KA is run for sufficiently many iterations $t$. Overall, one can write the time-varying (varying across coherence times due to channel randomness and varying across slots inside a coherence block due to internal randomness of KA) interference matrix as $\bfT^{\text{up}}= \scrG^t(\bfQ) \bfH\in \bC^{K \times K}$, where for simplicity we dropped the explicit dependence of $\bfT^{\text{up}}$ on the iteration $t$.

We use the upper and lower bounds on the achievable ergodic rate developed in \cite{fundamental_MM}. In particular, by treating the interference as noise and by coding across several coherence blocks, an upper bound on the ergodic capacity  is given by\footnote{This upper bound is obtained in \cite{caire2017ergodic} under the condition that the true channel state $\bfH$ is available at the BS such that the channel interference matrix $\bfT^{\text{up}}$ is fully known to the BS. This yields an upper bound on the achievable ergodic rate in our scenario, where only a noisy version of $\bfH$ is assumed to be known at the BS, thus, the channel interference matrix $\bfT^{\text{up}}= \scrG^t(\bfQ)^\herm \bfH$ is not perfectly known at the BS (since $\bfH$ is not known). }
\begin{align}
\overline{R}_k=\bE\left [ \log\left(1+ \frac{|\bfT^{\text{up}}_{k,k}|^2}{\sum_{k'\neq k} |\bfT^{\text{up}}_{k,k'}|^2 +\sigma^2_{k}}\right) \right ],\label{upper_bound_dumm}
\end{align}
where $\sigma^2_{k}=\frac{1}{\snr}\|\bfg_k^t\|^2$, with $\bfg_k^t$ denoting the conjugate of the $k$-th row of $\scrG^t(\bfQ)$, and where the expectation is taken over all the randomness of the channel state $\bfH$ and the detection matrix $\scrG^t(\bfQ)$ including the randomness due to KA. This yields the upper bound $\overline{R}=\sum_{k \in [K]} \overline{R}_k$ on the achievable sum rate. 
We will also use the following lower bound on the ergodic rate  from \cite{fundamental_MM}:
\begin{align}
\underline{R}_k=\log \left ( 1+ \frac{|\bE[\bfT^{\text{up}}_{k,k}]|^2}{\var(\bfT^{\text{up}}_{k,k})+ \sum_{k'\neq k} \bE[|\bfT^{\text{up}}_{k,k'}|^2]  +\bE[\sigma^2_{k}]}  \right),\label{lower_bound_dumm}
\end{align}
where $\var(.)$ denotes the variance of a random variable. This yields the lower bound $\underline{R}=\sum_{k \in [K]} \underline{R}_k$ on the ergodic rate. Following a similar argument, we obtain the corresponding upper/lower bounds on the ergodic capacity for the DL. 

In the next section, we use Monte-Carlo (MC) simulations  to empirically evaluate these lower and upper  bounds on the performance of our algorithm. 
%
Therefore, we need to be able to compute, at each instance of MC  simulation, the underlying matrix $\scrG^t(\bfQ)$ for a specific iteration $t$. Let us consider a specific instance of MC simulation and let us denote by $\clR_t:=\{r(l): l=1,2, \dots, t\}$ the random indices of the rows selected by KA up to iteration $t$ while decoding users symbols in a UL scenario. We can apply the iterative formula \eqref{q_proof_2} in the proof of Proposition \ref{lin_prop} 
to compute $\scrG^t(\bfQ)$ for each iteration $t$ explicitly. As an alternative, we can apply the technique proposed in Section \ref{sec:matrix_design} for computing the detector/precoder matrix. For example, in the UL scenario considered here, while running KA to decode users symbols from the noisy input in a specific instance of MC simulation, we also run in parallel $K$  KAs  with corresponding inputs $\{\bfs_i=\bfe_i: i \in [K]\}$ as explained in Section \ref{sec:matrix_design}. However, we impose the additional restriction that all the $K$ KAs use the same row indices as in the original KA used for decoding users symbols. 
In this way, we can compute,  at each instance  of MC simulation,  $\scrG^t(\bfQ)$ along with the estimate of users symbols.
Repeating the same procedure for each MC simulation, we  produce independent realizations of the random variables needed for computing upper/lower bounds in \eqref{upper_bound_dumm} and \eqref{lower_bound_dumm}, respectively. 
A similar technique can  be applied  to a DL scenario.

\section{Simulation results}\label{sec:results}
In this section, we evaluate the performance of our proposed technique empirically  via numerical simulations.
\subsection{Channel Model}
For all the simulations, we assume a correlated channel model for every user $k \in [K]$ where the true channel vector of the user $k$ is given by  $\bfh^k= \Phim^{1/2}\bfz^k$ where $\bfz^k \sim \cg\left( 0,\bfI_M\right)$ and where $\Phim$ is the covariance matrix of $\bfh^k$ with entries given by  $\left[ \bf \Phi\right] _{i,j}=a^{\left| i-j\right| }$ for some correlation parameter $a \in [0,1]$. We  assume that the entries of $\bfz^k$, thus, those of $\bfh^k$,  vary independently from one coherence period to another. We also assume that  the  estimated channel matrix $\bfQ$, available at the BS in each coherence period, is a noisy version of the true channel matrix $\bfH$ given by $\bfQ=\sqrt{1-\tau^2}\bfH+\tau\bfN$ where $\bfN$ is a matrix with i.i.d. entries  $\left[ \bfN\right] _{m,k}\sim \cg\left( 0,1\right) $, and where $\tau \in [0,1]$ is the estimation quality parameter with $\tau=0$ yielding a perfect channel estimation.

%

\subsection{Kaczmarz Residual Effect}
\label{sec:results:residual}

In this section, we investigate via numerical simulations the effect of the residual error in a KA directly applied to an OD SLE.          
We focus on a UL scenario where the noisy received signal at the BS is given by $\bfy=\bfH \bfs^\circ + \bfn^\circ$, where $\bfH$ is the true $M\times K$ channel matrix, where $\bfs^\circ \in \bC^K$ is the users symbols, and where $\bfn^\circ$ is the AWGN at BS antennas.  As explained in Sections \ref{sec:definition} and \ref{sec:scenarios}, neglecting the effect of the noise $\bfn^\circ$ and  applying the KA to the  OD SLE $\bfH {\bfs}=\bfy$ with the unknown ${\bfs}$ and the noisy input $\bfy$  results in a residual error such that the decoded users symbols do not converge to the estimate produced by the MMSE detector. 

This has been illustrated via numerical simulations in Fig.\,\ref{fig:compare_error_norm}, where it is seen that KA (Algorithm \ref{tab:Kac}) directly applied to OD SLE $\bfH \bfs= \bfy$ yields a considerably large estimation error, whereas our proposed KA yields an estimation error that approaches that of the MMSE detector after sufficiently many iterations. Note that for the simulations, we assume that the true channel state $\bfH$ is available at the BS to avoid the additional error caused by the noisy channel state and to pinpoint  only the residual error due to KA.
In Fig.\,\ref{fig:compare_error_norm}, we also compare the performance of our algorithm with that of another method  proposed in \cite{Herman}, which applies the KA to the UD SLE $[\bfH, \sqrt{\xi} \bfI_K] \bfz=\bfy$ with the noisy input $\bfy$, where $\xi>0$ is the noise parameter and  where $\bfz=[\bfs; \bfn] \in \bC^{M+K}$ contains both the symbol vector $\bfs\in \bC^K$ and the noise $\bfn\in \bC^M$. 
The underlying idea in \cite{Herman} is that the SLE $[\bfH, \sqrt{\xi} \bfI_K] \bfz=\bfy$ is always consistent, thus, the proposed KA avoids the effect of the residual (similarly to our proposed KA). However, a direct calculation based on our convergence analysis of KA in Section \ref{sec:math frame} reveals that the average gain of the resulting matrix $\bfB:=[\bfH, \sqrt{\xi} \bfI_K]$ in \cite{Herman} along the subspace $\clX$ produced by its rows (see, e.g., Definition \ref{avg_gain}) is given by 
$\kappa_\clX(\bfB)=\frac{\xi}{\|\bfH\|_\sfF^2 + K \xi}$, which is much lower than the gain $\frac{\lambda_{\min} (\bfH^\herm \bfH) + \xi}{\|\bfH\|_\sfF^2 + K \xi}$ corresponding to the matrix $[\bfH^\herm, \sqrt{\xi} \bfI_K]$ in our proposed KA (this is obtained from \eqref{MMSED_DL} assuming the perfect channel state $\bfQ=\bfH$). As a result, we expect that the KA in \cite{Herman} be extremely slower than our proposed KA, especially for high SNR ($\xi \to 0$). 
\begin{figure*}[t]
	\centering
	\input{}
	\caption{\small Comparison of the detection error  in the UL, $\left\| \widehat{\bfs}-\bfs^\circ\right\| ^2$, versus number of iterations for different algorithms: original MMSED \eqref{MMSED_eq}, our proposed method (Algorithm \ref{tab:Kac_MMSED}), original KA (Algorithm \ref{tab:Kac}), and the KA proposed in \cite{Herman}. We consider $M=256$  antennas, $K=32$ users, and perfect channel state at the BS.}  
	\label{fig:compare_error_norm}
\end{figure*}
This is easily verified via numerical simulations, as illustrated in Fig.\,\ref{fig:compare_error_norm}. It is seen that at a relatively low SNR ($\text{SNR}=0$\,dB),  KA in \cite{Herman} ultimately convergess to the estimation error of the original MMSED but it converges much slower than our  proposed KA. At high SNR ($\text{SNR}=20$\,dB) where $\xi \to 0$, in contrast, we can identify two different convergence phases.  At first, KA in \cite{Herman} performs almost identically to the KA applied to the OD SLE $\bfH \bfs= \bfy$ (as expected due to $\xi \to 0$) and  reduces the estimation error rapidly  to the saturation limit of OD SLE $\bfH \bfs= \bfy$ caused by the noise residual effect. Then, the algorithm spends the rest of the iterations on reducing/compensating the residual error until it approaches the estimation error of the original MMSED, but this happens very slowly due to the small value of $\xi$.

\begin{figure*}[t]
	\centering
%
%
%
\begin{tikzpicture}

\begin{axis}[%
width=3in,
height=2in,
at={(5in,0.481in)},
scale only axis,
xmin=-20,
xmax=20,
xlabel style={font=\footnotesize},
xlabel={Transmit Power to Noise Ratio (SNR per receive antenna) [dB]},
ymin=0,
ymax=10,
ylabel style={font=\scriptsize},
ylabel={Average per-User Spectral Efficiency  [bit/s/Hz]},
axis background/.style={fill=white},
xmajorgrids,
ymajorgrids,
legend style={at={(0.01,0.99)}, anchor=north west, legend cell align=left, align=left, draw=white!15!black, font=\small}
]
\addplot [mark={triangle}, mark options={solid}, color=red , mark size=3pt, line width=1.2 pt]
table[row sep=crcr]{%
	-20	0.130640657155717\\
	-15	0.366087352214462\\
	-10	0.85671982443442\\
	-5	1.76554857348669\\
	0	3.03160912794368\\
	5	4.53918396478015\\
	10	6.16606427482049\\
	15	7.85810166133231\\
	20	9.45989815180692\\
};
\addlegendentry{original MMSED-ub}
\addplot [mark={square}, mark options={solid}, color=blue , mark size=3pt, line width=0.8 pt]
table[row sep=crcr]{%
	-20	0.109578284428963\\
	-15	0.320459877006773\\
	-10	0.811013947513637\\
	-5	1.72532480592047\\
	0	3.02814276362642\\
	5	4.53283943639187\\
	10	6.15144436209322\\
	15	7.78695503139128\\
	20	9.37867436353199\\
};
\addlegendentry{proposed MMSED-ub}
\addplot [mark=diamond, mark size=3pt, line width=1.2pt]
table[row sep=crcr]{%
-20	0.110450093558802\\
-15	0.317346680919515\\
-10	0.811906510962162\\
-5	1.7255538869069\\
0	3.0235173952725\\
5	4.43341114329832\\
10	5.17453299078138\\
15	5.74077747765152\\
20	6.88713667918871\\
};
\addlegendentry{MMSED in [44]-ub}

\addplot [mark=o, mark options={solid}, mark size=3pt, line width=1.2 pt]
 table[row sep=crcr]{%
	-20	0.018391832330035\\
	-15	0.0245087488554827\\
	-10	0.0644306717307928\\
	-5	0.266128655956252\\
	0	0.857192735134255\\
	5	1.84549204275539\\
	10	3.27935085608374\\
	15	4.93812610391084\\
	20	6.45850265793193\\
};
\addlegendentry{Alg.\,\ref{tab:Kac} MMSED-ub}
\addplot [mark={triangle}, dashed, mark options={solid}, color=black , mark size=2pt, line width=0.8 pt]
table[row sep=crcr]{%
	-20	0.130008166355991\\
	-15	0.314228805977818\\
	-10	0.796576465905935\\
	-5	1.79055750597775\\
	0	3.09626716746073\\
	5	4.52822598903809\\
	10	6.17389854204501\\
	15	7.74933127388974\\
	20	9.44285872719947\\
};
\addlegendentry{original MMSED-lb}
\addplot [dashed, mark={square}, mark options={solid}, color=blue , mark size=2pt, line width=1.2 pt]
table[row sep=crcr]{%
-20	0.14284397056804\\
-15	0.357468228798219\\
-10	0.828217346872508\\
-5	1.76442501828837\\
0	2.9878840548578\\
5	4.56257425762542\\
10	6.16874184656599\\
15	7.77139548679843\\
20	9.43865827741569\\
};
\addlegendentry{proposed MMSED-lb}
\addplot [dashed, mark=diamond, mark options=solid, line width=1.2 pt ]
table[row sep=crcr]{%
-20	0.121016614227966\\
-15	0.332894820359502\\
-10	0.8349281977955\\
-5	1.72674164353241\\
0	3.05236053392227\\
5	4.40441779355347\\
10	5.21910884019108\\
15	5.75015701751506\\
20	6.85811267993888\\
};
\addlegendentry{MMSED in [44]-lb}
\addplot [dashed, mark=o, mark options={solid}, mark size=2pt, line width=1.2 pt]
table[row sep=crcr]{%
	-20	0.0117043582293655\\
	-15	0.027318976557198\\
	-10	0.0702937507709507\\
	-5	0.285613713299698\\
	0	0.847568989444259\\
	5	1.91340703007941\\
	10	3.29813700748215\\
	15	4.86257605772515\\
	20	6.40608944246987\\
};
\addlegendentry{Alg.\,1 MMSED-lb}

\end{axis}
\end{tikzpicture}%
	\caption{\small Comparison of the upper/lower bounds (denoted by \lq ub\rq/\lq lb\rq \, respectively) on the per-user spectral efficiency in the UL versus SNR for different schemes: original MMSED \eqref{MMSED_eq}, our proposed method (Algorithm \ref{tab:Kac_MMSED}), original KA (Algorithm \ref{tab:Kac}), and the KA proposed in \cite{Herman}. We consider $M=256$ antennas, $K=32$ users, and  perfect channel state at the BS. The complexity budget is fixed to $40 MK$ for all of the schemes (except the original MMSED).
  }
	\label{fig:simple_Kac}
\end{figure*}


We also perform numerical simulations to investigate the effect of the residual error directly on the achievable ergodic rate in Massive MIMO. Fig.\,\ref{fig:simple_Kac} illustrates the numerical results. As expected, the residual error reduces the achievable rate considerably. 
It is also seen that although the MMSED in \cite{Herman} generally performs better than the simple KA (Algorithm \ref{tab:Kac}) at low SNRs, it completely loses its efficiency at high SNRs because of its extremely slow convergence. Our proposed method, however, is able to converge to the performance of the original MMSED in all  ranges of SNRs. 

\begin{figure*}
\input{}
	\caption{\small Upper and lower bounds (denoted by \lq ub\rq/\lq lb\rq \, respectively) of the per-user ergodic spectral efficiency in the DL versus SNR for a loading factor of $\rho=\frac{K}{M}=\frac{1}{8}$, channel estimation quality factor $\tau=0.1$ and uncorrelated channels ($a=0$), are compared for the proposed method (denoted by \lq P\rq) against original linear precoders (ZFBF and RZFBF) and TPE scheme \cite{Bjornson}. The numbers at the end show computational complexity as a multiple of $MK$ which finally show the number of arithmetic operations. The computational complexity budget is assumed to be $10MK$ according to $J=3$ in TPE, $18MK$ according to $J=5$ in TPE, and $32MK$.}
	\label{fig:sim:down1}
\end{figure*}
\subsection{Downlink}
\label{sec:results:DL}
In the DL, we compare the performance of our method with that of the \textit{Truncated Polynomial Expansion} (TPE) technique proposed in \cite{Bjornson} in terms of spectral efficiency and total number of operations as a criterion for affordable computational complexity.
Fig.\,\ref{fig:sim:down1} illustrates  the upper  and the lower bound on the achievable per-user spectral efficiency versus SNR of our proposed algorithm.
It is seen from  Fig. \ref{fig:sim:down1} that for both ZFBF and RZFBF, $32MK$ operations guarantee a perfect match with the original ZFBF and RZFBF up to $20$\,dB SNR respectively, where for lower SNRs, a lower number of iterations is sufficient.  
It is seen that our proposed algorithm  has a slightly more complexity than the TPE proposed in \cite{Bjornson}. However, it has also the following crucial advantages compared with \cite{Bjornson}:
\begin{enumerate}
\item In TPE, matrix inversion is approximated via Taylor series expansion with a finite number of terms $J$, known as the TPE order. So for a certain value of $J$, one needs to compute $J$ coefficients of the expansion  in advance. For higher signal to noise ratios, typically higher value of $J$ is needed. This can be seen in Fig. \ref{fig:sim:down1} where for lower SNR values, lower $J$ can also make a good estimate. Moreover, when  $J$ changes, one needs to compute all the coefficients from scratch, which  incurs additional complexity. This is in contrast to our proposed method where only the number of iterations is needed to be increased or decreased for higher or lower values of SNR respectively. This can also be seen in Fig. \ref{fig:sim:down1}.

\item By changing the number of scheduled users, all the computations of the polynomial coefficients in TPE must be done from scratch. In our proposed method, however, everything is calculated each time a signal is going to be sent in the DL, or received in the UL, and hence no extra calculations are required.

\item When the channel response available at the BS changes quite fast with time (channel aging), or in OFDM-based systems where the channel response is available in only a subset of subcarriers (signal dimensions), the channel response needs to be, somehow, interpolated in time or in frequency respectively, to obtain the (approximate) channel response in other signal dimensions.
A scheme that computes directly the precoding vectors one-by-one, such as our proposed method, is much better suited to any channel tracking/prediction in time, interpolation in the OFDM subcarriers, etc. than a scheme such as TPE that computes the precoding/detection matrix from a subset of signal dimensions and then keeps it fixed for the remaining signal dimensions inside a coherence block. As a result, our proposed method can be combined with suitable channel prediction/interpolation techniques to obtain much better performances in situations where block fading channel model does not approximate the true channel very well. 

\item The proposed algorithm does not require the knowledge of channel estimation quality parameter $\tau$.

\end{enumerate}
It is also worthwhile to mention that, the number of operations  in both the proposed method and the TPE grows with the coherence period. For very large coherence periods traditional matrix-matrix multiplication will outperform both TPE and the proposed method, in terms of number of operations, although this happens only when the channel varies really slowly.

\begin{figure*}[t]	
\input{}
	\caption{\small Upper and lower bounds (denoted by \lq ub\rq/\lq lb\rq \, respectively) of the per-user ergodic spectral efficiency in the UL versus SNR for a loading factor of $\rho=\frac{K}{M}=\frac{1}{8}$, channel estimation quality factor $\tau=0$ and correlation parameter $a=\left\lbrace 0,0.6\right\rbrace $ are compared for the proposed method (denoted by \lq P\rq) against original linear detectors (ZFD and MMSED), standard AMP \cite{Maleki_Studer_ISIT2015} and nonparametric AMP \cite{ghods2017optimally} (denoted by \lq S\rq and \lq N\rq\, respectively). The numbers at the end show computational complexity as a factor of $MK$ which finally show the number of arithmetic operations. The computational complexity budget is assumed to be $\left\lbrace 24MK,40MK\right\rbrace $.}	
	\label{fig:sim:up1}
\end{figure*}

\subsection{Uplink}
\label{sec:results:UL}
In the UL, we compare the performance of our algorithm with  a very recent \textit{approximate message passing} (AMP) technique proposed in \cite{ghods2017optimally, Maleki_Studer_ISIT2015}, especially with \cite{Maleki_Studer_ISIT2015} where the authors used AMP to build a \textit{nonparametric} MMSE detector for massive MIMO that is able to obtain an estimate of the signal and noise power. 
Fig. \ref{fig:sim:up1} illustrates our simulation results for the UL scenario in terms of upper  and lower bounds on the achievable per-user spectral efficiency. It is seen that in uncorrelated channels ($a=0$), for both ZFD and MMSED, $40MK$ operations are sufficient for our proposed method to achieve a perfect match to the upper and lower bounds of the original ZFD and MMSED up to $20$\,dB SNR respectively.\footnote{for the upper bounds, $32MK$ iterations suffice.} It is also seen that the AMP MMSED \cite{Maleki_Studer_ISIT2015} is able to converge to the original MMSED when the channel is uncorrelated ($a \approx 0$), with an order of $24MK$ operations.\footnote{This translates into $6$ iterations in AMP, since AMP carries out two matrix-vector operations in each iteration which will results in an order of $4MK$ additions and multiplications per iteration.} However, for a slightly correlated channel ($a\geq0.6$), neither the standard AMP \cite{Maleki_Studer_ISIT2015} nor the nonparametric AMP \cite{ghods2017optimally} are able to suitably recover the users symbols even for a large number of iterations whereas our proposed algorithm becomes slower due to channel correlation but still converges by slightly increasing the number of iterations. 
\subsection{Gap-to-Capacity versus Computational Complexity}
In this part, we compare our proposed method with the TPE \cite{Bjornson} and the AMP \cite{Maleki_Studer_ISIT2015,ghods2017optimally} in terms of   \textit{normalized gap-to-capacity} versus the number of operations $t$ given by $\frac{S_{\max}-S_t}{S_{\max}}$, where  $S_t$ denotes the upper/lower bound of per-user spectral efficiency of a specific beamforming/detection scheme after $t$ operations and where $S_{\max}$ denotes the   maximum possible upper/lower bound of per-user spectral efficiency achieved by original linear schemes. 
\begin{figure*}[]
\input{}
	\caption{\small Gap to the maximum spectral efficiency for the proposed scheme (denoted by \lq P\rq) is compared against that of the TPE \cite{Bjornson}, standard AMP \cite{Maleki_Studer_ISIT2015} and nonparametric AMP \cite{ghods2017optimally} (denoted by \lq S\rq and \lq N\rq\, respectively). 
		Loading factor, SNR, channel estimation quality parameter and channel correlation parameter are  $\rho=\frac{K}{M}=\frac{1}{8}$, $20$\,dB, $\tau=0$ and $a=\left\lbrace 0,.06\right\rbrace $ respectively.}	
	\label{fig:sim:gap}
\end{figure*}
This is illustrated in  Fig.\ \ref{fig:sim:gap} for different beamforming schemes in the DL and detection schemes in the UL for different channel correlations $a=0$ and $a=0.6$. It is seen that for uncorrelated channels, i.e $a=0$, the proposed method can achieve $\text{Gap}=10^{-2}$ with nearly $40MK$ operations. 
It is also worthwhile to mention that the AMP algorithm \cite{Maleki_Studer_ISIT2015,ghods2017optimally} is quite efficient for uncorrelated channels but fail in  correlated channels. 


%
%
\section{Conclusion}\label{sec:conclusion}

In this paper, we proposed a novel statistically robust approach for reducing  the computational complexity of beamforming in massive MIMO systems.  We showed that designing all variants of linear precoders/ detectors for the DL and the UL can be posed as the solution of an appropriate set of linear equations and applied a randomized variant of Kaczmarz algorithm to solve the resulting equations.  We also proposed a unified framework to analyze the performance of our proposed algorithm. Our algorithm does not require  knowledge of the statistics of users channel vectors and its convergence speed can be fully specified in terms of the condition number of the channel matrix, which depends on the number of BS antennas $M$, the number of served users $K$, as well as the statistical correlation of the channel vectors of the users. Our numerical simulations show that our algorithm is very competetive with the schemes proposed in the literature based on random matrix theory and approximate message passing in terms of computations and has a comparable performance. 
{\small
\bibliographystyle{IEEEtran}
\bibliography{references}

\begin{thebibliography}{10}
\providecommand{\url}[1]{#1}
\csname url@samestyle\endcsname
\providecommand{\newblock}{\relax}
\providecommand{\bibinfo}[2]{#2}
\providecommand{\BIBentrySTDinterwordspacing}{\spaceskip=0pt\relax}
\providecommand{\BIBentryALTinterwordstretchfactor}{4}
\providecommand{\BIBentryALTinterwordspacing}{\spaceskip=\fontdimen2\font plus
\BIBentryALTinterwordstretchfactor\fontdimen3\font minus
  \fontdimen4\font\relax}
\providecommand{\BIBforeignlanguage}[2]{{%
\expandafter\ifx\csname l@#1\endcsname\relax
\typeout{** WARNING: IEEEtran.bst: No hyphenation pattern has been}%
\typeout{** loaded for the language `#1'. Using the pattern for}%
\typeout{** the default language instead.}%
\else
\language=\csname l@#1\endcsname
\fi
#2}}
\providecommand{\BIBdecl}{\relax}
\BIBdecl

\bibitem{WSA_paper_published}
M.~N. Boroujerdi, S.~Haghighatshoar, and G.~Caire, ``Low-complexity and
  statistically robust beamformer design for massive {MIMO} systems,'' in
  \emph{WSA 2018; 22nd International ITG Workshop on Smart Antennas}, March
  2018, pp. 1--8.

\bibitem{Marzetta-TWC10}
T.~L. Marzetta, ``{Noncooperative cellular wireless with unlimited numbers of
  base station antennas},'' \emph{{IEEE Trans. on Wireless Commun.}}, vol.~9,
  no.~11, pp. 3590--3600, Nov. 2010.

\bibitem{Intro_MM_Ashikhmin}
L.~Lu, G.~Y. Li, A.~L. Swindlehurst, A.~Ashikhmin, and R.~Zhang, ``An overview
  of massive {MIMO}: Benefits and challenges,'' \emph{IEEE Journal of Selected
  Topics in Signal Processing}, vol.~8, no.~5, pp. 742--758, Oct. 2014.

\bibitem{Intro_MM_Rusek}
F.~Rusek, D.~Persson, B.~K. Lau, E.~G. Larsson, T.~L. Marzetta, O.~Edfors, and
  F.~Tufvesson, ``Scaling up {MIMO}: Opportunities and challenges with very
  large arrays,'' \emph{IEEE Signal Processing Magazine}, vol.~30, no.~1, pp.
  40--60, Jan. 2013.

\bibitem{Intro_MM_Larsson}
E.~G. Larsson, O.~Edfors, F.~Tufvesson, and T.~L. Marzetta, ``Massive {MIMO}
  for next generation wireless systems,'' \emph{IEEE Communications Magazine},
  vol.~52, no.~2, pp. 186--195, February 2014.

\bibitem{downlink_MM_linear}
H.~Yang and T.~L. Marzetta, ``Performance of conjugate and zero-forcing
  beamforming in large-scale antenna systems,'' \emph{IEEE Journal on Selected
  Areas in Communications}, vol.~31, no.~2, pp. 172--179, February 2013.

\bibitem{caire2003achievable}
G.~Caire and S.~Shamai, ``{On the achievable throughput of a multiantenna
  {Gaussian} broadcast channel},'' \emph{{IEEE Transactions on Information
  Theory}}, vol.~49, no.~7, pp. 1691--1706, 2003.

\bibitem{weingarten2006capacity}
H.~Weingarten, Y.~Steinberg, and S.~S. Shamai, ``{The capacity region of the
  {Gaussian} multiple-input multiple-output broadcast channel},'' \emph{{IEEE
  Transactions on Information Theory}}, vol.~52, no.~9, pp. 3936--3964, 2006.

\bibitem{uplink_decorrelator}
K.~S. Schneider, ``Optimum detection of code division multiplexed signals,''
  \emph{IEEE Transactions on Aerospace and Electronic Systems}, vol. AES-15,
  no.~1, pp. 181--185, Jan. 1979.

\bibitem{uplink_decorrelator_sync}
R.~Lupas and S.~Verdu, ``Linear multiuser detectors for synchronous
  code-division multiple-access channels,'' \emph{IEEE Transactions on
  Information Theory}, vol.~35, no.~1, pp. 123--136, Jan. 1989.

\bibitem{uplink_decorrelator_async}
M.~K. Varanasi, ``Noncoherent detection in asynchronous multiuser channels,''
  \emph{IEEE Transactions on Information Theory}, vol.~39, no.~1, pp. 157--176,
  Jan. 1993.

\bibitem{uplink_MMSE}
Z.~Xie, R.~T. Short, and C.~K. Rushforth, ``A family of suboptimum detectors
  for coherent multiuser communications,'' \emph{IEEE Journal on Selected Areas
  in Communications}, vol.~8, no.~4, pp. 683--690, May 1990.

\bibitem{MIMO_detection_survey}
S.~Yang and L.~Hanzo, ``Fifty years of {MIMO} detection: The road to
  large-scale {MIMO}s,'' \emph{IEEE Communications Surveys Tutorials}, vol.~17,
  no.~4, pp. 1941--1988, Fourthquarter 2015.

\bibitem{MM-implementation}
H.~Prabhu, J.~Rodrigues, L.~Liu, and O.~Edfors, ``\BIBforeignlanguage{eng}{A 60
  pj/b 300 mb/s 128 × 8 massive {MIMO} precoder-detector in 28 nm fd-soi},''
  in \emph{\BIBforeignlanguage{eng}{International Solid-State Circuits
  Conference (ISSCC)}}, February 2017, p.~2.

\bibitem{Moshavi1996}
\BIBentryALTinterwordspacing
S.~D.~L. Moshavi~Shimon, Kanterakis Emmanuel~G., ``Multistage linear receivers
  for {DS-CDMA} systems,'' \emph{International Journal of Wireless Information
  Networks}, vol.~3, no.~1, pp. 1--17, Jan. 1996. [Online]. Available:
  \url{https://doi.org/10.1007/BF02106658}
\BIBentrySTDinterwordspacing

\bibitem{muller2001design}
R.~R. Muller and S.~Verd{\'u}, ``Design and analysis of low-complexity
  interference mitigation on vector channels,'' \emph{{IEEE Journal on Selected
  Areas in Communications}}, vol.~19, no.~8, pp. 1429--1441, 2001.

\bibitem{Sessler2005}
G.~M.~A. Sessler and F.~K. Jondral, ``Low complexity polynomial expansion
  multiuser detector for {CDMA} systems,'' \emph{IEEE Transactions on Vehicular
  Technology}, vol.~54, no.~4, pp. 1379--1391, July 2005.

\bibitem{Zarei}
S.~Zarei, W.~Gerstacker, R.~R. M\"uller, and R.~Schober, ``Low-complexity
  linear precoding for downlink large-scale {MIMO} systems,'' in \emph{2013
  IEEE 24th Annual International Symposium on Personal, Indoor, and Mobile
  Radio Communications (PIMRC)}, Sept. 2013, pp. 1119--1124.

\bibitem{Bjornson}
A.~Kammoun, A.~M{\"u}ller, E.~Bj{\"o}rnson, and M.~Debbah, ``Linear precoding
  based on polynomial expansion: Large-scale multi-cell {MIMO} systems,''
  \emph{IEEE Journal of Selected Topics in Signal Processing}, vol.~8, no.~5,
  pp. 861--875, 2014.

\bibitem{2017Benzin}
A.~{Benzin}, G.~{Caire}, Y.~{Shadmi}, and A.~{Tulino}, ``{Truncated polynomial
  expansion downlink precoders and uplink detectors for massive {MIMO}},''
  \emph{ArXiv e-prints}, Nov. 2017.

\bibitem{haghighatshoar2017massive}
S.~Haghighatshoar and G.~Caire, ``Massive {MIMO} channel subspace estimation
  from low-dimensional projections,'' \emph{IEEE Transactions on Signal
  Processing}, vol.~65, no.~2, pp. 303--318, 2017.

\bibitem{haghighatshoar2016low}
------, ``Low-complexity massive {MIMO} subspace estimation and tracking from
  low-dimensional projections,'' \emph{arXiv preprint arXiv:1608.02477}, 2016.

\bibitem{uplink_MMSE_reduced_rank}
M.~L. Honig and W.~Xiao, ``Performance of reduced-rank linear interference
  suppression,'' \emph{IEEE Transactions on Information Theory}, vol.~47,
  no.~5, pp. 1928--1946, Jul. 2001.

\bibitem{ghods2017optimally}
R.~Ghods, C.~Jeon, G.~Mirza, A.~Maleki, and C.~Studer, ``Optimally-tuned
  nonparametric linear equalization for massive {MU-MIMO} systems,''
  \emph{arXiv preprint arXiv:1705.02985}, 2017.

\bibitem{bayati2011dynamics}
M.~Bayati and A.~Montanari, ``The dynamics of message passing on dense graphs,
  with applications to compressed sensing,'' \emph{IEEE Transactions on
  Information Theory}, vol.~57, no.~2, pp. 764--785, 2011.

\bibitem{Kaczmarz}
S.~Kaczmarz, ``Angen{\"a}herte auflösung von systemen linearer gleichungen,''
  \emph{Bulletin International de l'Académie Polonaise des Sciences et des
  Lettres. Classe des Sciences Mathématiques et Naturelles. Série A, Sciences
  Mathématiques}, pp. 335--357, 1937.

\bibitem{Strohmer2006}
T.~Strohmer and R.~Vershynin, \emph{A randomized solver for linear systems with
  exponential convergence}.\hskip 1em plus 0.5em minus 0.4em\relax Berlin,
  Heidelberg: Springer Berlin Heidelberg, 2006, pp. 499--507,
  \url{https://doi.org/10.1007/11830924_45}.

\bibitem{needell2014stochastic}
D.~Needell, R.~Ward, and N.~Srebro, ``Stochastic gradient descent, weighted
  sampling, and the randomized {Kaczmarz} algorithm,'' in \emph{Advances in
  Neural Information Processing Systems}, 2014, pp. 1017--1025.

\bibitem{chi2016kaczmarz}
Y.~Chi and Y.~M. Lu, ``Kaczmarz method for solving quadratic equations,''
  \emph{IEEE Signal Processing Letters}, vol.~23, no.~9, pp. 1183--1187, 2016.

\bibitem{shepard2012argos}
C.~Shepard, H.~Yu, N.~Anand, E.~Li, T.~Marzetta, R.~Yang, and L.~Zhong,
  ``Argos: Practical many-antenna base stations,'' in \emph{Proceedings of the
  18th Annual International Conference on Mobile Computing and
  Networking}.\hskip 1em plus 0.5em minus 0.4em\relax ACM, 2012, pp. 53--64.

\bibitem{caire2010multiuser}
G.~Caire, N.~Jindal, M.~Kobayashi, and N.~Ravindran, ``{Multiuser {MIMO}
  achievable rates with downlink training and channel state feedback},''
  \emph{{IEEE Transactions on Information Theory}}, vol.~56, no.~6, pp.
  2845--2866, 2010.

\bibitem{RZF_Hoydis}
J.~Hoydis, S.~T. Brink, and M.~Debbah, ``Massive {MIMO} in the {UL/DL} of
  cellular networks: How many antennas do we need?'' \emph{IEEE Journal on
  Selected Areas in Communications}, vol.~31, no.~2, pp. 160--171, February
  2013.

\bibitem{Huh11}
H.~Huh, G.~Caire, H.~Papadopoulos, and S.~Ramprashad, ``Achieving massive
  {MIMO} spectral efficiency with a not-so-large number of antennas,''
  \emph{{IEEE Trans. on Wireless Commun.}}, vol.~11, no.~9, pp. 3226--3239,
  2012.

\bibitem{ZF_MRC_MM}
F.~A.~P. {de Figueiredo}, J.~P. {Miranda}, F.~L. {Figueiredo}, and F.~A.~C.~M.
  {Cardoso}, ``{Uplink performance evaluation of massive {MU-MIMO} systems},''
  \emph{ArXiv e-prints}, Mar. 2015.

\bibitem{fundamental_MM}
T.~L. Marzetta, E.~G. Larsson, H.~Yang, and H.~Q. Ngo, \emph{Fundamentals of
  massive {MIMO}}.\hskip 1em plus 0.5em minus 0.4em\relax Cambridge: cambridge
  university press, 2016.

\bibitem{RZF_Peel}
C.~B. Peel, B.~M. Hochwald, and A.~L. Swindlehurst, ``A vector-perturbation
  technique for near-capacity multiantenna multiuser communication-part i:
  {Channel} inversion and regularization,'' \emph{IEEE Transactions on
  Communications}, vol.~53, no.~1, pp. 195--202, Jan. 2005.

\bibitem{Needell2010}
\BIBentryALTinterwordspacing
D.~Needell, ``Randomized {Kaczmarz} solver for noisy linear systems,''
  \emph{BIT Numerical Mathematics}, vol.~50, no.~2, pp. 395--403, June 2010.
  [Online]. Available: \url{https://doi.org/10.1007/s10543-010-0265-5}
\BIBentrySTDinterwordspacing

\bibitem{extendedKaczmarz}
A.~Zouzias and N.~Freris, ``Randomized extended {Kaczmarz} for solving
  least-squares,'' \emph{arXiv preprint arXiv:1205.5770v3}, 2013.

\bibitem{tao2012topics}
T.~Tao, \emph{Topics in random matrix theory}.\hskip 1em plus 0.5em minus
  0.4em\relax American Mathematical Society Providence, RI, 2012, vol. 132.

\bibitem{random_mtx_verdu}
\BIBentryALTinterwordspacing
A.~M. Tulino and S.~Verd{\'u}, ``Random matrix theory and wireless
  communications,'' \emph{Foundations and Trends® in Communications and
  Information Theory}, vol.~1, no.~1, pp. 1--182, 2004. [Online]. Available:
  \url{http://dx.doi.org/10.1561/0100000001}
\BIBentrySTDinterwordspacing

\bibitem{grimmett2001probability}
G.~Grimmett and D.~Stirzaker, \emph{Probability and random processes}.\hskip
  1em plus 0.5em minus 0.4em\relax Oxford university press, 2001.

\bibitem{vishwanath2003duality}
S.~Vishwanath, N.~Jindal, and A.~Goldsmith, ``Duality, achievable rates, and
  sum-rate capacity of {Gaussian} {MIMO} broadcast channels,'' \emph{{IEEE
  Transactions on Information Theory}}, vol.~49, no.~10, pp. 2658--2668, 2003.

\bibitem{caire2017ergodic}
G.~Caire, ``On the ergodic rate lower bounds with applications to massive
  {MIMO},'' \emph{arXiv preprint arXiv:1705.03577}, 2017.

\bibitem{Herman}
G.~T. Herman, \emph{Fundamentals of Computerized Tomography}.\hskip 1em plus
  0.5em minus 0.4em\relax Springer-Verlag London, 2009.

\bibitem{Maleki_Studer_ISIT2015}
C.~Jeon, R.~Ghods, A.~Maleki, and C.~Studer, ``Optimality of large {MIMO}
  detection via approximate message passing,'' in \emph{2015 IEEE International
  Symposium on Information Theory (ISIT)}, June 2015, pp. 1227--1231.

\end{thebibliography}
}

\end{document}